\newtheorem{Theorem}{Theorem}[section]
\newtheorem{Proposition}[Theorem]{Proposition}
\newtheorem{Lemma}[Theorem]{Lemma}
\newtheorem*{Example}{Example}
\theoremstyle{definition}
\newtheorem{Definition}[Theorem]{Definition}
\newtheorem{Observation}[Theorem]{Observation}
\newcommand{\C}{\mathcal C}
\title{Counting spinal phylogenetic networks}
\author{Andrew Francis and Michael Hendriksen}
\begin{document}
\maketitle

\begin{abstract}
Phylogenetic networks are an important way to represent evolutionary histories that involve reticulations such as hybridization or horizontal gene transfer, yet fundamental questions such as how many networks there are that satisfy certain properties are very difficult.  A new way to encode a large class of networks, using expanding covers, may provide a way to approach such problems.  Expanding covers encode a large class of phylogenetic networks, called labellable networks. This class does not include all networks, but does include many familiar classes, including orchard, normal, tree-child and tree-sibling networks. As expanding covers are a combinatorial structure, it is possible that they can be used as a tool for counting such classes for a fixed number of leaves and reticulations, for which, in many cases, a closed formula has not yet been found. More recently, a new class of networks was introduced, called spinal networks, which are analogous to caterpillar trees for phylogenetic trees and can be fully described using covers. In the present article, we describe a method for counting networks that are both spinal and belong to some more familiar class, with the hope that these form a base case from which to attack the more general classes.
\end{abstract}

\section{Introduction}
Phylogenetic trees and networks are important ways to represent the evolutionary relationships between a set of taxa.  While trees describe histories that model evolution together with speciation, networks are able to include other ``reticulate'' processes such as horizontal gene transfer, hybridization, or endosymbiosis.  The capacity to model these processes comes at the price of complexity, and difficulties in inferring networks from the data, which are typically genetic sequences from extant taxa at the leaves of the network. 

In particular, while there are finitely many trees on a given number of leaves (exactly $(2n-3)!!$ rooted binary trees on $n$ leaves), without further qualification there are infinitely many rooted binary phylogenetic networks.   This is one reason why many classes of phylogenetic network have been defined, in attempts to find restrictions that make the problem more mathematically tractable, and hopefully also biologically plausible.  A recent summary of many currently described classes can be found in~\cite{kong2022classes}.  New classes, however, continue to be defined, and in the present paper we will show that one such new class, the \emph{spinal phylogenetic networks}, can be precisely enumerated.  

The class of spinal networks was defined in the context of the study of bijections between phylogenetic networks and covers of a finite set~\cite{francis2023phylogenetic}.  They can be defined as those networks for which there is a path from a leaf to the root that includes all non-leaf vertices (a ``spine'').  The \emph{trees} that satisfy this property are called ``caterpillar'' trees, and are an important class of trees for a number of reasons (for instance, they are those trees with a single cherry, and are the least ``balanced''~\cite{fischer2023tree}).  Consequently, studying combinatorial properties of spinal networks may provide a useful edge-case for the study of phylogenetic networks in more generality.

We will show in Section~\ref{s:counting.spinal} that the spinal networks can be precisely counted, along with a number of subclasses of spinal networks that have additional properties.  Of particular interest are the spinal networks that are also \emph{tree-child}.  The tree-child networks~\cite{cardona2008comparison} (or TCNs) are an important class because they have both properties that make them mathematically workable, as well as properties that may make them possible to reconstruct from data at the leaves.

A considerable amount of recent work has explored the problem of enumeration of tree-child networks. For instance, Fuchs et al. introduced a method based on generating functions to count TCNs with few reticulation vertices \cite{fuchs2019counting}, and Cardona and Zhang introduced an algorithm for enumeration of TCNs, allowing TCNs with up to $8$ taxa and up to $7$ reticulations to be counted \cite{cardona2020counting}. 
A range of approaches have also been taken to enumerate classes of phylogenetic networks asymptotically, including tree-child networks \cite{mcdiarmid2015counting,fuchs2019counting,fuchs2021asymptotic}. 
Despite this, no closed formula is known for counting TCNs on $n$ taxa with $k$ reticulations in general.

It is to be hoped that counting these classes of spinal networks will provide a valuable base from which to approach harder open problems, such as counting tree-child networks more generally.  For instance, one conjecture is that the number $|TC_{n,k}|$ of binary tree-child networks with $k$ reticulations and $n$ leaves is given by the recursion~\cite{pons2021combinatorial,fuchs2021asymptotic}:
\begin{equation}\label{e:TCN.recursion}
(n - k)|TC_{n,k}|=(n + 1 - k)(n - k) |TC_{n,k-1}| + n (2n + k - 3) |TC_{n-1,k}|.    
\end{equation}
The spinal networks that are tree-child may be able to play a role as a base case from which to identify a recursion for the number of tree-child networks.

This paper will count spinal networks as well as some subclasses of them, using the bijection between labellable phylogenetic networks and expanding covers of a finite set~\cite{francis2023labellable}. These subclasses of spinal networks include tree-child networks, binary networks, stack-free networks and a novel class termed fully tree-sibling networks.

\section{Preliminaries}

This article defines phylogenetic networks as follows.

\begin{Definition}
    A (rooted) phylogenetic network on leaf-set $X$, with $|X|=n$, is a directed acyclic graph whose vertex set satisfies the following:
    \begin{itemize}
        \item There is one vertex of in-degree 0, called the root;
        \item There are $n$ vertices of in-degree 1 and out-degree 0, called leaves and labelled by elements of $X$; and
        \item All other vertices can have any in-degree or out-degree $\ge 1$.
    \end{itemize}
\end{Definition}

Note that this definition is more broad than many definitions of phylogenetic network, in that it permits so-called \emph{degenerate} vertices, which have either in-degree and out-degree 1, or in-degree and out-degree strictly greater than 1.  Networks containing a degenerate vertex are called degenerate.
A network is considered \textit{binary} if all internal vertices have total degree exactly 3, and the root must have out-degree 2.  The recursive formula conjectured to count tree-child networks (Eq.~\eqref{e:TCN.recursion}) is for binary non-degenerate networks.

A vertex that has in-degree less than 2 is called a \emph{tree vertex} (including the root and the leaves), while vertices of in-degree greater than or equal to 2 are called \emph{reticulation vertices}.

In this paper we will be making use of the bijection between the class of \emph{labellable} networks, and ``expanding covers'' of finite sets~\cite{francis2023labellable}.  The labellable networks are a large class that contains all the classes discussed below, and many commonly studied classes can themselves be described in terms of a bijection with expanding covers that satisfy certain additional conditions~\cite{francis2023phylogenetic}.  

Given a phylogenetic network whose leaves are labelled by an ordered set $[n]:=\{1,\dots,n\}$, there is an algorithm for labelling the remaining non-root vertices that proceeds by identifying the unlabelled vertex with the least set of children, according to the lexicographic order on sets:
\[A \prec B \text{ if } A \subseteq B \text{ or } \min (A \backslash (A \cap B)) < \min (B \backslash (A \cap B)).\]
This algorithm generalises a similar one for phylogenetic trees~\cite{erdos1989applications}.

A phylogenetic network is said to be \emph{labellable} if this algorithm is well-defined.  Labellable networks can be characterised in terms of certain forbidden substructures, but for our purposes the important characterisation is that they are in bijection with the set of \emph{expanding covers}:

\begin{Definition}[\cite{francis2023labellable}]\label{d:expanding.cover}
A cover $\C$ of $[m]$ is expanding if, for $n = m - |\C| + 1$, it satisfies:
\begin{enumerate}  
    \item No element of $[n]$ appears more than once, and
    \item For $i = 1, \dots, |\C|$, the cover contains at least $i$ subsets of $[n + i - 1]$.
\end{enumerate}
\end{Definition}

In fact, we have the following theorem.

\begin{Theorem}[\cite{francis2023labellable}, Theorem 4.4]
    The set of labellable phylogenetic networks is in bijection with the set of expanding covers.
\end{Theorem}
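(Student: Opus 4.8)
The plan is to exhibit two mutually inverse maps: a forward map $\Phi$ sending a labellable network to an associated cover, built from the labelling algorithm, and a reverse map $\Psi$ reconstructing a network from an expanding cover by a greedy procedure driven by the expanding condition. First I would set up $\Phi$. Given a labellable network $N$ on leaf-set $[n]$, run the labelling algorithm; since $N$ is labellable this uniquely labels every non-root vertex, the leaves retaining $1,\dots,n$ and the $m-n$ internal non-root vertices receiving $n+1,\dots,m$ in the order of assignment, where $m+1$ is the total number of vertices. Let $\C(N)$ be the collection of children-sets $C(v)\subseteq[m]$ as $v$ ranges over all internal vertices including the root. I would then check that $\C(N)$ is a cover of $[m]$: every non-root vertex has in-degree $\ge 1$, hence occurs as a child of some internal vertex, so the union is all of $[m]$; its cardinality is $(m-n)+1$, which gives $n=m-|\C(N)|+1$ exactly as in Definition~\ref{d:expanding.cover}.

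Next I would verify the two defining conditions. Condition~(1) is immediate: a leaf $j\in[n]$ has in-degree exactly $1$, so $j$ lies in precisely one set of $\C(N)$. For condition~(2) the key observation is that a vertex becomes eligible for a label only once \emph{all} of its children are already labelled, so the vertex receiving label $n+i$ has children-set contained in $[n+i-1]$; consequently the $i$ sets assigned to labels $n+1,\dots,n+i$ are all subsets of $[n+i-1]$, showing that $\C(N)$ contains at least $i$ such subsets. Conversely, given an expanding cover $\C$ of $[m]$ with $n=m-|\C|+1$, I define $\Psi(\C)$ greedily on vertices $1,\dots,m$ together with a root $r$: for $i=1,\dots,|\C|-1$ in turn, assign to vertex $n+i$ the lexicographically least not-yet-assigned member of $\C$ that is a subset of $[n+i-1]$ (condition~(2) guarantees at least $i$ subsets of $[n+i-1]$ exist while only $i-1$ are used, so such a set is always available), and assign the single remaining set to $r$. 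Then add an edge $w\to c$ for each $c$ in the set assigned to $w$. I would check that $\Psi(\C)$ is a phylogenetic network: edges always point from a higher label, or the root, to a strictly lower label, so the graph is acyclic; the root carries no label and so lies in no set, giving it in-degree $0$; each leaf $1,\dots,n$ is assigned no set, hence has out-degree $0$, and by condition~(1) lies in exactly one set, hence in-degree $1$; and every remaining vertex is internal with positive in- and out-degree.

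Finally I would show $\Phi$ and $\Psi$ are mutually inverse. The crux is that both constructions are governed by the same greedy rule: a vertex is eligible for the next label precisely when its children-set is contained in the set of already-used labels, and among eligible vertices the one with the lexicographically least children-set is selected. Thus at each stage the eligible vertices of $N$ correspond exactly to the unused members of $\C(N)$ lying inside the current label-set, so the forward and reverse orders of assignment coincide, yielding $\Psi(\Phi(N))=N$ and $\Phi(\Psi(\C))=\C$. The step I expect to be the main obstacle is pinning down \textbf{well-definedness}: the algorithm requires a \emph{unique} least children-set at each step, so one must show that labellability forbids two internal vertices from sharing a children-set, which is what makes $\C$ a genuine set on which $\prec$ is a strict total order and the greedy choice unambiguous. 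Managing this tie-breaking, and checking the boundary behaviour of the root — whose set is assigned last yet need not be lexicographically largest — is where the argument needs the most care.
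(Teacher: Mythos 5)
The paper itself offers no proof of this statement: it is imported verbatim as Theorem~4.4 of \cite{francis2023labellable}, so there is no in-paper argument to compare yours against. Judged on its own terms, your construction is the natural (and, as far as it goes, correct) one: taking $\C(N)$ to be the collection of children-sets of all internal vertices gives the right cardinality bookkeeping ($n=m-|\C|+1$), condition~(1) is exactly the in-degree-one property of leaves, condition~(2) follows because a vertex can only be labelled after all its children are, and the greedy reconstruction $\Psi$ is well-posed precisely by the expanding condition. Your observation that well-definedness of $\Phi$ reduces to showing no two vertices share a children-set is also on target, and for two \emph{non-root} vertices it is immediate: identical children-sets become eligible simultaneously and produce an unresolvable tie under $\prec$, contradicting labellability.

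The genuine gap is the point you flag but leave unresolved: the ``boundary behaviour of the root.'' Your claim that the forward and reverse orders of assignment coincide fails as stated unless you rule out the root's children-set being selected prematurely by $\Psi$, since the root is excluded from the labelling algorithm on $N$ but its set sits in $\C(N)$ with no marker distinguishing it. The missing argument is short but essential: consider the vertex that receives the largest label $m$. Any non-root parent of it could only be labelled after it, i.e.\ would need a label exceeding $m$, which is impossible; hence every parent of that vertex is the root, so the root's children-set contains $m$. Consequently the root's set is not a subset of $[n+i-1]$ for any $i<|\C|$, so it is never eligible in the greedy reconstruction before the final step, and the inductive claim that $\Psi$ and the labelling algorithm make identical choices goes through. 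The same fact also closes a small hole in your tie-breaking argument (the root cannot share its children-set with any other vertex, since that vertex would then be a non-root parent of $m$). With these two observations supplied, your proposal becomes a complete proof; without them, the central assertion ``the forward and reverse orders of assignment coincide'' is unjustified.
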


The labelling algorithm and Definition~\ref{d:expanding.cover} naturally lead to a \textit{labelling order}, defined in the following way.

The set giving rise to the vertex label $i$ is determined by its position in an ordering of the sets in $\C$. This ordering will use $\prec$ and is defined by the following algorithm:
\begin{enumerate}
\item For $i = 1, \dots , |\C|$, do:
\begin{enumerate}
\item $C_i$ is the minimal set in $(\C,\prec)$ contained in $[n + i - 1]$.
\item Set $\C = \C \backslash C_i$ .
\end{enumerate}
\item Output sequence $C_1, \dots ,C_{|\C|}$.
\end{enumerate}

We can now move to the main topic of this paper, which is the class of \emph{spinal networks}. We define a \textit{spine} in a network to be a path from a leaf to the root that traverses all non-leaf vertices, and we call a network \textit{spinal} if it has a spine.

\begin{Theorem}[\cite{francis2023phylogenetic}, Theorem 8.1]
    A network is spinal if and only if its cover $\C$ has exactly $i$ subsets of $[n + i - 1]$, for each $i = 1, \dots, |\C|$.
\end{Theorem}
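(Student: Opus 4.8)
The plan is to convert the arithmetic condition into a statement about the shape of the network by reading it through the labelling-order algorithm. Throughout I write $a_i := \#\{C \in \C : C \subseteq [n+i-1]\}$, so that the expanding condition guarantees $a_i \geq i$, and I use the standard facts that the vertex receiving label $n+i$ is the one carrying the set $C_i$ (with the root carrying $C_{|\C|}$), that for a vertex $v$ the set $C(v)$ of its children's labels is exactly its set in $\C$, and that $C_i \subseteq [n+i-1]$ by construction, so that $C_1,\dots,C_i$ always witness $a_i\geq i$. I would prove the two implications separately.

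For the forward direction (spinal implies $a_i=i$), I would fix a spine and list the non-leaf vertices from the bottom leaf up to the root as $b_1,\dots,b_{|\C|}$, so $b_{|\C|}$ is the root and each $b_{t+1}\to b_t$ is an edge. Since the spine is a directed Hamiltonian path on the non-leaf vertices, it induces the unique topological order on them; consequently every edge between two non-leaf vertices $b_s\to b_t$ has $s>t$, and every other child of a non-leaf vertex is a leaf. I would then show by induction that the labelling order coincides with this bottom-to-top order: $b_1$ is the only non-leaf vertex all of whose children are leaves (each $b_t$ with $t\geq 2$ has the non-leaf child $b_{t-1}$), so it is labelled first, and at step $i$ the vertex $b_i$ has all its non-leaf children among the already-labelled $b_1,\dots,b_{i-1}$, whereas every $b_t$ with $t>i$ still has the unlabelled child $b_{t-1}$; hence $b_i$ is the unique admissible vertex and $C_i=C(b_i)$. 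With this identification the maxima are immediate: $\max C_1\leq n$, while for $i\geq 2$ the largest child of $b_i$ is its spine child $b_{i-1}$ of label $n+i-1$, so $\max C_i=n+i-1$ exactly. Therefore $C_j\subseteq[n+i-1]$ holds precisely for $j=1$ and for $2\leq j\leq i$, giving $a_i=i$.

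For the converse I would run the labelling algorithm under the hypothesis $a_i=i$ and recover the edges directly. After $C_1,\dots,C_{i-1}$ have been removed, exactly $a_i-(i-1)=1$ set remains inside $[n+i-1]$, so $C_i$ is forced; moreover $a_{i-1}=i-1$ forces $C_i\not\subseteq[n+i-2]$, whence $\max C_i=n+i-1$ for every $i\geq 2$. Translating back, the vertex labelled $n+i$ has the vertex labelled $n+i-1$ as a child for $2\leq i\leq|\C|-1$, the root (carrying $C_{|\C|}$) has $m=n+|\C|-1$ as a child, and the vertex labelled $n+1$ (carrying the nonempty set $C_1\subseteq[n]$) has a leaf as a child. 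Concatenating these edges yields a directed path from the root through $m,m-1,\dots,n+1$ and down to a leaf that visits every non-leaf vertex exactly once; this is a spine, so the network is spinal.

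The main obstacle is the inductive identification of the labelling order with the bottom-to-top spine order in the forward direction: it rests on the acyclicity fact that, because all non-leaf vertices lie on a single directed path, every auxiliary edge among them points strictly downward, so that at each stage exactly one candidate vertex has all of its children already labelled. Obtaining the maxima exactly (rather than merely bounding them) then relies on the observation that the spine child always carries the largest child-label, and some care is needed at the two ends of the spine, namely the all-leaf bottom vertex $b_1$ and the root $b_{|\C|}$.
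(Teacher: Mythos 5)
Your proof is correct and complete in both directions. The paper itself offers no proof to compare against --- the statement is imported verbatim from \cite{francis2023phylogenetic} (Theorem 8.1) --- but your argument is sound: the forward direction's identification of the labelling order with the bottom-to-top spine order (forced at each step because exactly one vertex has all children labelled, by acyclicity all non-spinal edges between internal vertices point down the spine), and the converse's extraction of a spine from the forced containments $n+i-1\in C_i$, both hold up. Indeed, the key intermediate fact you isolate (that $\max C_i = n+i-1$ for all $i\ge 2$ in a spinal cover) is precisely the property the present paper extracts from this theorem in the proof of Proposition~\ref{p:allspinal}.
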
 

Compare this characterisation to Definition \ref{d:expanding.cover} - specifically that the words ``at least'' are replaced by ``exactly''. We note that the only spinal networks that are also phylogenetic trees are precisely the set of caterpillar trees. For this reason, spinal networks can be seen as a generalisation of caterpillar trees.

\section{Counting Classes of Spinal Networks}
\label{s:counting.spinal}

We will now proceed to leverage the expanding cover interpretation of labellable phylogenetic networks, together with the naturally restrictive characterisation of spinal networks, to enumerate various classes of spinal networks. With the exception of the class of all spinal networks, which we will deal with first, these enumerations will follow a reasonably similar structure - first characterising the vertices in a spine of a spinal network of some particular class, then creating a surjection based on this characterisation into a nice set-theoretic class, and then modifying this result to fully count the relevant subclass of spinal networks. We will give a more complete overview of this structure in Section \ref{s:proofsketch}.

In Section \ref{s:all}, phylogenetic networks are permitted to be degenerate and non-binary, but in subsequent sections every network will be binary, non-degenerate, and spinal. We begin by noting some helpful lemmas.

For some class of networks $R$, define the related class of \textit{network shapes in} $R$, denoted $S(R)$, to be the set of equivalence classes of $R$ under the equivalence relation $\sim_S$, where $N_1 \sim_S N_2$ are equivalent if there exists a permutation on the labels of the leaves of $N_1$ for which we obtain the network $N_2$. 

We will also need the notion of \textit{cherry reduction} for a binary spinal network shape $N$. If a vertex has only one parent, denote the parent of a vertex $x$ by $p(x)$. A pair of leaves $(x,y)$, who by definition only have one parent each,  is a \textit{cherry} in a network if $x$ and $y$ have the same parent $p(x)=p(y)$. Reducing a cherry $(x, y)$ in a network is the action of deleting $x$ and replacing the two edges $(p(p(x)), p(x))$ and $(p(x), y)$ with a single edge $(p(p(x)), y)$. Note that $p(p(x))$ is well-defined, as $N$ is a binary network, so as $p(x)$ has two child vertices it can only have one parent vertex.

We also define a \textit{re-cherrying} operation on a binary spinal network, in which one of the (up to) two leaf edges furthest from the root is subdivided, and a new leaf is attached to the resulting degree-$2$ vertex. Note that in the context of binary spinal networks, the only possible cherry that can appear would be the children of the internal vertex furthest from the root, so cherry reduction and re-cherrying are inverse operations.

\begin{Lemma}
\label{l:shapetonet}
    Let $R$ be a class of binary spinal networks that is closed under permutations of leaf labels, cherry reductions {and re-cherrying}. If $R_{n,k}$ denotes the subclass of $R$ on $n$ leaves with $k$ reticulations, then
    \[|R_{n,k}|=n!|S(R_{n,k})|-\frac{n!}{2}|S(R_{n-1,k})|.\]
\end{Lemma}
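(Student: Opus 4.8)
The plan is to let the symmetric group $S_n$ act on $R_{n,k}$ by permuting leaf labels. Since $R$ is closed under permutations of leaf labels this action is well-defined, and by the definition of $S(R)$ its orbits are exactly the shapes in $S(R_{n,k})$. By the orbit--stabiliser theorem the orbit of a network $N$ has size $n!/|\mathrm{Stab}(N)|$, where $\mathrm{Stab}(N)\le S_n$ is the set of label permutations induced by a graph automorphism of $N$. Summing over orbits gives
\[
|R_{n,k}| \;=\; \sum_{s \in S(R_{n,k})} \frac{n!}{|\mathrm{Stab}(s)|},
\]
so the entire computation reduces to determining the stabiliser sizes and then re-expressing the sum.

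Next I would pin down the automorphisms of a binary spinal network $N$. Writing the spine as a directed path $r=u_1\to u_2\to\cdots\to u_m\to\ell$ through all internal vertices, I would argue that any automorphism fixes the root (the unique source) and must respect the linear order the spine imposes on $u_1,\dots,u_m$, and so fixes every internal vertex pointwise. Each leaf other than the two in a bottom cherry is then the unique leaf-child of a fixed internal vertex (two leaves sharing a parent can only occur at the deepest vertex $u_m$, as already noted for binary spinal networks), hence is itself fixed; the only remaining freedom is the transposition of the two cherry leaves. Thus $\mathrm{Stab}(N)$ is trivial when $u_m$ is a reticulation (no cherry) and has order $2$ when $u_m$ is a tree vertex (a cherry), giving orbit sizes $n!$ and $n!/2$ respectively.

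Writing $a$ for the number of cherry-ending shapes in $S(R_{n,k})$ and $b$ for the rest, the previous step yields $|R_{n,k}| = n! b + \tfrac{n!}{2} a = n!\,|S(R_{n,k})| - \tfrac{n!}{2} a$, using $a+b=|S(R_{n,k})|$. It then remains to show $a = |S(R_{n-1,k})|$, which is where cherry reduction and re-cherrying enter. Reducing the unique cherry of a cherry-ending shape in $R_{n,k}$ produces a shape in $R_{n-1,k}$ (closure under cherry reduction, with the reticulation number unchanged), while re-cherrying the deepest leaf edge inverts this and returns a cherry-ending shape in $R_{n,k}$ (closure under re-cherrying, again preserving $k$). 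As these operations are mutually inverse and descend to well-defined maps on shapes, they give a bijection between the cherry-ending shapes in $S(R_{n,k})$ and all of $S(R_{n-1,k})$, so $a=|S(R_{n-1,k})|$, which is the claimed identity.

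The main obstacle is the rigidity claim of the second step: making precise that the spine determines a canonical order on the internal vertices fixed by every graph automorphism, so that no unexpected symmetry inflates $\mathrm{Stab}(N)$. Because reticulation edges mean the internal vertices do not simply form a bare directed path, I expect this to need an induction along the spine, showing each $u_i$ is forced once $u_1,\dots,u_{i-1}$ are fixed. The remaining ingredients---closure of $R$, preservation of the reticulation number, and well-definedness of cherry reduction and re-cherrying on shapes---follow from the hypotheses on $R$ together with the structural fact that a binary spinal network has at most one cherry, located at $u_m$.
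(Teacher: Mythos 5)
Your proposal is correct and takes essentially the same approach as the paper: your orbit--stabiliser computation is just a group-theoretic phrasing of the paper's direct count of $n!$ networks per cherry-free shape and $n!/2$ per cherry-ending shape, and your cherry-reduction/re-cherrying bijection between cherry-ending shapes and $S(R_{n-1,k})$ is exactly the paper's final step. The only difference is presentational: the paper asserts the spine-rigidity fact (that only the cherry swap can stabilise a labelled network) informally via uniqueness of the spinal path, where you flag it explicitly as the step requiring an induction along the spine.
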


\begin{proof}
    Note that for any fixed spinal network $N$, there exists a unique path from a leaf to the root up to choice of the initial leaf if the first internal vertex of the path has two leaf children. Therefore, for each leaf $\ell$ that is a child of the $i$-th internal vertex, any permutation on the labels of the leaves that does not have $\ell$ as a fixed point will result in a different network that has the same network shape. Denote the subclass of $S(R_{n,k})$ with $i$ children of the first vertex by $L_i$, noting that $|S(R_{n,k})|=|L_1|+|L_2|$. It follows that if $N$ has only one child of the first internal vertex (that is, $N$ has no cherries), the equivalence class of network shapes in $L_1$ containing $N$ will contain $n!$ networks (because each permutation of the $n$ leaves will give a distinct network). However, if $N$ has two children of the first internal vertex, the equivalence class of network shapes in $L_2$ will contain $n!/2$ networks (as a consequence of the fact that the permutation swapping these labels of these two children does not result in a new network). It follows that
    \[|R_{n,k}|=n!|L_1|+\frac{n!}{2}|L_2|=n!(|S(R_{n,k})|-|L_2|)+\frac{n!}{2}|L_2|=n!|S(R_{n,k})|-\frac{n!}{2}|L_2|.\]

    We finally note that there exists a straightforward bijection between $L_2$ and network shapes in $S(R_{n-1,k})$ obtained by cherry reduction in the forward direction and re-cherrying in the inverse. 
    The result follows.
\end{proof}

\subsection{The class of all spinal networks}
\label{s:all}

Denote the cardinality of the class of spinal networks on $n$ leaves with covers of size $|\C|$ by $S_{n,|\C|}$. Note that, in contrast to all other results in this manuscript, we are not restricting these networks to be binary.

\begin{Proposition}
    \label{p:allspinal}
    $S_{n,|\C|} = 2^{\binom{|\C|-2}{2}}(|\C|^n - (|\C|-1)^n)$.
\end{Proposition}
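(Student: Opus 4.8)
The plan is to count spinal covers directly and transport the answer through the bijection with labellable networks, so that the (here possibly degenerate and non-binary) networks are replaced by the purely combinatorial objects of the characterization theorem. Writing $c=|\C|$ and $m=n+c-1$, a spinal cover is a cover of $[m]$ in which each element of $[n]$ occurs exactly once (the first expanding-cover condition together with the covering property) and in which, for every $i=1,\dots,c$, exactly $i$ of the member sets lie inside $[n+i-1]$.

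First I would extract the rigid skeleton forced by the ``exactly $i$'' condition. Comparing the number of sets contained in $[n+i-1]$ with the number contained in $[n+i-2]$ shows that for each $i\ge 2$ exactly one set has maximum element equal to $n+i-1$, while exactly one set is contained in $[n]$. Hence the $c$ sets may be indexed $C_1,\dots,C_c$ with $C_1\subseteq[n]$ nonempty and $\max C_i=n+i-1$ for $i\ge 2$, so the ``new'' elements $n+1,\dots,n+c-1$ are precisely the successive maxima and are each forced into exactly one set. This separates the count into a product: placements of the $n$ leaves among the $c$ sets, times placements of the internal elements $n+1,\dots,n+c-1$ among the sets that are permitted to contain them.

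Next I would handle the leaves. Since no element of $[n]$ may repeat, each leaf lies in exactly one of the $c$ sets, and because every member of a cover is nonempty the only genuine constraint is that $C_1$ receive at least one leaf. Inclusion--exclusion on the ``$C_1$ nonempty'' condition then contributes exactly $|\C|^n-(|\C|-1)^n$, which is the second factor of the claimed formula.

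The heart of the argument, and the step I expect to be the main obstacle, is counting the placements of the internal elements. Each element $n+t$ is forced into the unique set of which it is the maximum, and may then be inserted independently into some of the later sets (those whose maximum exceeds $n+t$) without disturbing any of the ``exactly $i$'' conditions. The delicate point is to pin down precisely how many later sets are available to each $n+t$, to check that the extremal elements at the two ends of the spine contribute no freedom, and to verify that these binary choices are mutually independent; summing the resulting degrees of freedom gives a triangular number of free bits and hence the factor $2^{\binom{|\C|-2}{2}}$. Multiplying the two factors then yields the stated value of $S_{n,|\C|}$.
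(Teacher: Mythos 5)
Your first two steps are sound and match the paper's own argument exactly: the ``exactly $i$'' condition forces a rigid skeleton ($C_1\subseteq[n]$ nonempty, $\max C_i=n+i-1$ for $i\ge 2$, so labelling order is determined by maxima), and the leaf placements contribute $|\C|^n-(|\C|-1)^n$ by complementary counting. The genuine gap is that the step you yourself call the heart of the argument is never carried out: you promise to ``pin down precisely how many later sets are available to each $n+t$'' and to sum the degrees of freedom, but you never do the computation, and instead assert that it yields $2^{\binom{|\C|-2}{2}}$. Since that sum is exactly what decides the exponent, the proof is incomplete at its decisive point.

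Worse, if you do complete it, you do not get the claimed exponent. Write $c=|\C|$. The element $n+t$ (for $1\le t\le c-1$) is forced into $C_{t+1}$, of which it is the maximum, and may then independently be included or not in each of $C_{t+2},\dots,C_c$; such insertions never change any set's maximum, hence never disturb the ``exactly $i$'' conditions, so these are genuinely free and independent choices. That is $c-t-1$ free bits for each $t$, and
\[
\sum_{t=1}^{c-1}(c-t-1)=0+1+\cdots+(c-2)=\binom{c-1}{2},
\]
giving the factor $2^{\binom{|\C|-1}{2}}$ rather than $2^{\binom{|\C|-2}{2}}$; your hope that ``the extremal elements at the two ends of the spine contribute no freedom'' cannot close this gap, since the element $n+c-1$ already contributes $0$ in the sum above. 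You should know that the paper's own proof derives the correct product $\prod_{i=2}^{|\C|}2^{|\C|-i}=2^{\sum_{j=0}^{|\C|-2}j}$ and then makes the same slip at the last step, writing this triangular number as $\binom{|\C|-2}{2}$ when it equals $\binom{|\C|-1}{2}$; so the proposition as printed is itself off. A concrete check: for $n=1$, $|\C|=3$, the spinal covers are exactly $\{\{1\},\{2\},\{3\}\}$ and $\{\{1\},\{2\},\{2,3\}\}$, so $S_{1,3}=2$, while the stated formula gives $1$ and the corrected exponent gives $2$. A complete (and correct) proof must therefore compute the sum explicitly and will end with exponent $\binom{|\C|-1}{2}$.
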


\begin{proof}
    Note first that by virtue of being a spinal network, the structure of the expanding cover under the labelling order is very restricted. In particular, as noted in Theorem 8.1 of \cite{francis2023phylogenetic}, for each $i=2,\dots,|\C|$, the set $C_i$ must contain $n+i-1$ (note, it is stated in \cite{francis2023phylogenetic} that $C_1$ must also have this property, but this is not necessary). This means it is not necessary to consider the labelling order explicitly, as the position of a set in the order is made clear by considering whether the largest element in the set is a leaf (in which case the set is $C_1$) or the element $n+i-1$ for $i>1$, in which case it is $C_i$.

    We now assign the leaves. The leaves may be assigned to any set in any order, but each must be assigned to a single set, and with the caveat that $C_1$ must be non-empty. There are therefore 
    \begin{equation}
    \label{e:leaves}
        |\C|^n - (|\C|-1)^n
    \end{equation}
    options for doing this (all options, minus those for which $C_1$ is empty). 
    
    Then, for each $i \in 2,\dots,|\C|$ each element $n+i-1$ must be contained in set $C_i$, and for $j=i+1,\dots,|\C|$ the element $n+i-1$ either is or is not contained in set $C_j$, for a total of $2^{|\C|-i}$ possibilities. Multiplying over all possible $i$, we get

    \begin{equation}
    \label{e:internal}
        \prod_{i=2}^{|\C|} 2^{|\C|-i} = 2^{\sum_{j=0}^{|\C|-2} j} = 2^{\binom{|\C|-2}{2}}.
    \end{equation}

    Multiplying expressions \eqref{e:leaves} and \eqref{e:internal} together gives the result in the proposition.
\end{proof}

\subsection{A brief overview of some subsequent proofs}
\label{s:proofsketch}

Results in each of the following $3$ sections (Sections \ref{s:caterpillar}, \ref{s:stackfree} and \ref{s:fullytreesibling}) follow a similar method of proof, which we will describe informally for ease of following the proofs.

In each case, we will specifically count the binary network shapes with no parallel edges that belong to some given class (tree-child, stack-free or fully tree-sibling) and have $n$ leaves and $k$ reticulations, which here we will denote $\mathcal{TS}(n,k)$. We will refer to the path along the spine as the \textit{spinal path}, which will start at a leaf and end at the root. We note here that this is uniquely defined up to the choice (between up to $2$ options) of the initial leaf. Then, each internal vertex $v$ along the spinal path may be labelled according to which of the three mutually exclusive properties it has: 

\begin{enumerate}
    \item $v$ has a (L)eaf child that is not on the spine ($L$),
    \item $v$ is the $i$-th (R)eticulation vertex counting along the spinal path from the leaf ($R_i$), or
    \item $v$ is the (P)arent of the $i$-th reticulation vertex, via an edge not in the spinal path ($P_i$).
\end{enumerate} 

We can therefore encode each network shape in $\mathcal{TS}(n,k)$ as a finite sequence of the letters $\{L,R_1,\dots,R_k,P_1,\dots,P_k\}$ containing $n-1$ $L$'s and $1$ each of the remaining letters, with certain restrictions on the order in the sequence. In particular, all classes must obey the following three rules:
\begin{enumerate}[start=1,label={(\bfseries R\arabic*):}]
    \item If $j>i$, then $R_j$ must appear subsequently to $R_i$; 
    \item $P_i$ must appear subsequently to $R_i$; and
    \item $P_i$ must not immediately follow $R_i$.
\end{enumerate} 

We note that R3 follows from the requirement that our networks have no parallel edges. We will later briefly consider networks with parallel edges in Section \ref{s:parallel}, but will not follow the present proof structure.

The specific network class will then impose additional restrictions - for instance, in a stack-free network a reticulation vertex must not be a child of another reticulation vertex, which in this encoding would require that there is no subsequence of the form $R_iR_j$. We note that while we consider $3$ possible additional rules in the present article, it may be interesting to consider other additional rules or variations on those we present, and consider what class of binary spinal networks these may correspond to. This may be helpful in future exploration of binary spinal networks.

The next step involves mapping these sequences to some other known set for which the enumeration is already known. Finally, we then apply Lemma \ref{l:shapetonet} to count the number of networks, rather than just network shapes.

\subsection{The class of binary spinal tree-child networks}
\label{s:caterpillar}

In this Section, all networks are assumed to be non-degenerate. That is, every non-root internal vertex has out-degree $1$ or in-degree $1$, but not both.

\begin{Definition}[\cite{cardona2008comparison}]
    \emph{Tree-child} networks are phylogenetic networks for which every non-leaf vertex has a child that is a tree vertex. 
\end{Definition}

\begin{Theorem}[\cite{francis2023phylogenetic}, Theorem 4.1]
\label{t:tcn}
    Tree-child networks are in bijection with expanding covers for which each set contains an integer that appears exactly once in the cover.
\end{Theorem}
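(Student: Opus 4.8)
The plan is to read the tree-child property directly off the cover, translating the condition ``every non-leaf vertex has a tree-vertex child'' one vertex at a time. First I would recall the data assigned by the labellable--cover bijection (Theorem 4.4): for a network on $n$ leaves whose cover $\C$ has $|\C|$ sets, set $m = n + |\C| - 1$; the leaves carry labels $1,\dots,n$, the non-root internal vertices carry $n+1,\dots,m$, and the root is the unique non-leaf vertex left without a label in $[m]$. The essential point is that each set produced by the labelling order is the set of children of exactly one non-leaf vertex (with $C_{|\C|}$ the set of children of the root), so the $|\C|$ sets of $\C$ correspond bijectively to the $|\C|$ non-leaf vertices.

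The key observation I would then establish is that, for every $v \in [m]$, the number of sets of $\C$ containing $v$ equals the in-degree of the vertex $v$: a set contains $v$ precisely when the corresponding non-leaf vertex has $v$ as a child. Because $\C$ covers $[m]$, each element lies in at least one set, so $v$ appears exactly once in $\C$ if and only if it has in-degree $1$, i.e.\ if and only if $v$ is a tree vertex. (In particular every leaf appears exactly once, consistent with leaves being tree vertices and with part~(1) of Definition~\ref{d:expanding.cover}.)

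Granting these two facts, the theorem follows at once. A non-leaf vertex $w$ has a child that is a tree vertex if and only if the set of children of $w$ contains an element appearing exactly once in $\C$; since the sets of $\C$ are exactly the children-sets of the non-leaf vertices (the root included), the network is tree-child if and only if every set of $\C$ contains an integer that appears exactly once. As tree-child networks are labellable, they already sit inside the expanding covers under the bijection, and the preceding equivalence identifies their image as precisely the covers described; reading the bijection in both directions gives the claim.

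I expect the work to be bookkeeping rather than conceptual. The one point needing care is confirming that the root is genuinely one of the non-leaf vertices whose children-set is a set of $\C$, so that the root's own obligation to have a tree-vertex child is captured by the universal quantifier ``every set''. Establishing the multiplicity-equals-in-degree correspondence cleanly is the crux of the argument; once it is in place, both directions of the theorem are a direct restatement, with the leaves serving as a consistency check rather than a special case.
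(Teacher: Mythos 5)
Your argument is correct: the two facts you isolate (each set of $\C$ is the children-set of exactly one non-leaf vertex, root included, and the multiplicity of an element in $\C$ equals its in-degree, so a child is a tree vertex precisely when it appears exactly once) immediately translate the tree-child condition into the stated cover condition, which is exactly how this characterisation is obtained in the cited source. Note that the paper itself does not prove this statement --- it imports it as Theorem 4.1 of \cite{francis2023phylogenetic} --- and the only ingredient you take on faith, that tree-child networks are labellable so the bijection applies to them, is likewise a known fact from that literature rather than a gap in your reasoning.
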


Let $N$ be a binary spinal TCN with associated expanding cover $\C$. As mentioned in the proof of Proposition \ref{p:allspinal}, since $N$ is spinal, there exists a simple way to determine the position of a set in the labelling order by considering whether the largest element in the set is a leaf (in which case the set is $C_1$) or the element $n+i-1$ for $i>1$, in which case it is $C_i$. Additionally, the characterisation of covers of non-degenerate networks given in \cite{francis2023labellable} can become a characterisation of binary networks with a small modification which we present here:

\begin{Theorem}
\label{t:binarycover}
Let $N$ be a labellable phylogenetic network with cover $\C$ in labelling order $C_1,\dots,C_{|\C|}$. Then $N$ is binary if and only if all sets in $\C$ have size $1$ or $2$, and, for each $i \ge 1$:
    \begin{itemize}
        \item If $|C_i | = 2$, then $(n + i )$ appears once in $\C$; and
        \item If $|C_i| = 1$, then $(n + i )$ appears twice in $\C$.
    \end{itemize}
\end{Theorem}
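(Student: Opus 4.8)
The plan is to read vertex degrees off the cover via the bijection between labellable networks and expanding covers~\cite{francis2023labellable}, and then obtain the binary condition as the promised ``small modification'' of the non-degenerate characterisation of~\cite{francis2023labellable}. First I would fix the dictionary between the two pictures and record it carefully, since everything after it is pure translation. Under the labelling order $C_1,\dots,C_{|\C|}$, the set $C_i$ is the child-set of the vertex labelled $n+i$ for $1\le i\le|\C|-1$, while $C_{|\C|}$ is the child-set of the (unlabelled) root; hence $|C_i|$ is exactly the out-degree of the corresponding vertex. Dually, for an integer $j$ the number of sets of $\C$ containing $j$ equals the number of parents of vertex $j$, that is, its in-degree.

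With the dictionary fixed, I would rewrite ``binary'' in degree terms: every internal non-root vertex has total degree $3$ and the root has out-degree $2$. For an internal non-root vertex both degrees are at least $1$, so total degree $3$ forces the pair $(\text{in-degree},\text{out-degree})$ to be $(1,2)$ or $(2,1)$. Translating, the vertex labelled $n+i$ is a tree vertex exactly when $|C_i|=2$ and $n+i$ occurs once, and a reticulation exactly when $|C_i|=1$ and $n+i$ occurs twice --- which is precisely the stated pair of conditions, and which simultaneously forces each such $|C_i|\in\{1,2\}$. Equivalently, since binary networks are automatically non-degenerate, I could invoke the non-degenerate characterisation as a black box and merely cap the out-degree and in-degree at $2$: the cap on out-degree bounds each $|C_i|$ by $2$, the cap on in-degree bounds each element's multiplicity by $2$, and the non-degeneracy alternative (out-degree $1$ paired with in-degree $\ge 2$, or out-degree $\ge 2$ paired with in-degree $1$) then collapses to the two cases above. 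Both directions of the biconditional reduce to checking that the cover produced by a binary network meets these conditions, and that the network reconstructed from a cover meeting them has every internal non-root vertex of total degree $3$.

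The step demanding the most care is the root, i.e.\ the last set $C_{|\C|}$. Because the root carries no label, the integer $n+|\C|$ never occurs in $\C$, so the ``occurs once/twice'' clause is vacuous there; the root's only contribution to being binary is that its out-degree equal $2$, namely $|C_{|\C|}|=2$. I would therefore be explicit that the indexed conditions range over the labelled internal vertices $n+1,\dots,n+|\C|-1$ and treat the root separately, so that the blanket clause ``all sets have size $1$ or $2$'' is not mistaken for the full root requirement. A secondary point to pin down is that an element's multiplicity in $\C$ genuinely equals the in-degree of the corresponding vertex; this uses the absence of parallel edges, so I would invoke the standing convention guaranteeing this before applying the dictionary.
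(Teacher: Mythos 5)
The paper never actually proves Theorem~\ref{t:binarycover}: it is stated as a ``small modification'' of the non-degenerate characterisation of~\cite{francis2023labellable}, with no argument supplied. So your proposal is not an alternative to the paper's proof but a replacement for a missing one, and it is correct. The dictionary you fix (the set $C_i$ is the child-set of the vertex labelled $n+i$, the last set is the child-set of the unlabelled root, set size is out-degree, multiplicity across $\C$ is in-degree) is exactly what the bijection of~\cite{francis2023labellable} provides, and the degree arithmetic --- total degree $3$ for an internal non-root vertex forcing $(\mathrm{in},\mathrm{out})\in\{(1,2),(2,1)\}$ --- turns the binary condition into precisely the two displayed clauses. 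Your treatment of the root is moreover a genuine correction to the statement, not mere caution: read literally with $i$ ranging over all of $1,\dots,|\C|$, the conditions are unsatisfiable, because $n+|\C|$ never occurs in the ground set $[n+|\C|-1]$, so no network at all would qualify as binary; and once the index range is capped at $|\C|-1$, the blanket clause ``all sets have size $1$ or $2$'' really is too weak at the root --- the expanding cover $\{\{1,2\},\{3\}\}$ (a root with a single child above a cherry on leaves $1$ and $2$) satisfies every remaining condition yet its network has root out-degree $1$, so the requirement $|C_{|\C|}|=2$ must be imposed separately. Your parallel-edge caveat is likewise apt, since multiplicity equals in-degree only in their absence, and the paper itself handles parallel edges outside the cover framework (Section~\ref{s:parallel}). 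Either of your two routes (direct translation, or capping degrees in the black-boxed non-degenerate characterisation) completes the argument; the direct one is self-contained and is presumably what the authors intended.
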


Together with the spinal condition, this tightly limits our options for constructing an expanding cover for a binary spinal TCN. For instance, if $|C_i | = 2$, then $(n + i )$ appears once in $\C$, and by the spinal condition we know that it precisely must appear in $C_{i+1}$. In fact, we can characterise the sets appearing in an expanding cover for a binary spinal TCN even further, as follows. 

\begin{Theorem}
\label{t:spinaltcn}
Let $N$ be a binary spinal TCN with expanding cover $\C$. hen a set $C_i\in\C$ falls into one of the following three mutually exclusive categories:

\begin{enumerate}
    \item It is part of a consecutive pair of sets $C_i$, $C_{i+1}$ such that $C_{i+1}$ contains $n+i$ and a leaf, and either $C_{i}=\{n+i-1\}$ or $i=1$ and $C_i$ contains a singleton leaf; or
    \item $C_i$ contains $n+i-1$ and a leaf and $C_{i-1} \ne \{n+i-2\}$, or $i=1$ and $C_1$ is two leaves; or
    \item $C_i$ uniquely contains $n+i-1$ and some element $n+j$ for $0<j<i-1$ so that $C_j$ is the first element of a pair in Category 1.
\end{enumerate}

Indeed, for each pair of sets in Category $1$, there is a corresponding set in category $3$.
\end{Theorem}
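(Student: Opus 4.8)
The plan is to translate the three defining conditions of $N$ --- spinal, binary, and tree-child --- into statements about the \emph{elements} of each $C_i$, and then to read off the trichotomy by a short case analysis on $|C_i|$ and on the nature of the non-spine element of $C_i$. First I would record the structural facts to be used. From the labelling order, $C_i \subseteq [n+i-1]$ for every $i$, so in particular $C_1$ consists only of leaves; and since $N$ is spinal, the argument in the proof of Proposition~\ref{p:allspinal} gives $C_i \ni n+i-1$ for every $i \ge 2$, so the largest element of such a $C_i$ is exactly $n+i-1$. By Theorem~\ref{t:binarycover}, $|C_i| \in \{1,2\}$, and the vertex $n+i$ is a reticulation (its label occurs twice in $\mathcal{C}$) precisely when $|C_i|=1$, and a tree vertex (its label occurs once) precisely when $|C_i|=2$. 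Finally, by the tree-child characterisation, Theorem~\ref{t:tcn}, every $C_i$ contains an element occurring exactly once in $\mathcal{C}$; I will call such an element \emph{thin} (these are the leaves and non-root tree vertices) and a doubly-occurring element \emph{thick} (a reticulation vertex).

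Next I would classify the sets into three shapes. For $i\ge2$, either $C_i=\{n+i-1\}$ (then $n+i$ is a reticulation, and tree-childness forces its child $n+i-1$ to be thin, i.e. $|C_{i-1}|=2$), or $C_i=\{n+i-1,x\}$ with $x$ a leaf or an internal label $n+j$; in the latter case $n+j$ is both a non-spine child of $n+i$ and, by the spinal fact, the spine-child of $n+j+1$, hence has in-degree $2$ and is thick, with $j\le i-2$. The set $C_1$ is either a singleton leaf (so $n+1$ is a reticulation) or a pair of leaves. Thus every set is one of: a singleton \emph{reticulation set}; a \emph{leaf set} $\{n+i-1,\text{leaf}\}$ (or the bottom cherry $C_1$); or a \emph{parent set} $\{n+i-1,n+j\}$ with $n+j$ thick.

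The key step, which I expect to be the main obstacle, is to show that the spine-parent of every reticulation set is a leaf set; this is exactly what pairs up Category~1. If $C_i$ is a reticulation set, then $n+i$ is thick and lies in $C_{i+1}$ by the spinal fact. Were $C_{i+1}=\{n+i\}$, or $C_{i+1}=\{n+i,n+j\}$ with $n+j$ thick, then $C_{i+1}$ would contain no thin element, contradicting tree-childness (this is what rules out reticulation-stacks and a second reticulation child); hence $C_{i+1}=\{n+i,\text{leaf}\}$ is a leaf set, and $(C_i,C_{i+1})$ is a Category~1 pair. The same computation handles the bottom case $i=1$. Given this, a leaf set $C_i$ is the top of a Category~1 pair exactly when its spine-child $C_{i-1}$ is a reticulation singleton $\{n+i-2\}$, and the remaining leaf sets (including the cherry $C_1$) are precisely Category~2; the parent sets are precisely Category~3, since for $C_i=\{n+i-1,n+j\}$ the thickness of $n+j$ forces $C_j$ to be a singleton reticulation set, hence the first element of a Category~1 pair.

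Finally I would establish the Category~1--Category~3 correspondence and mutual exclusivity together. Each reticulation $n+j$ (equivalently each Category~1 pair) has in-degree exactly $2$: one parent is $n+j+1$, recorded by the leaf set $C_{j+1}$, and the other is a unique $n+i$ with $n+j\in C_i$ and $i\ne j+1$, whose set is the parent set $C_i=\{n+i-1,n+j\}$ --- the associated Category~3 set; this gives the claimed bijection. Mutual exclusivity follows by inspecting sizes and element types: reticulation sets are singletons, Category~2 and the leaf-set half of Category~1 contain a leaf, and the parent sets are the only size-$2$ sets with two internal elements, so no set lies in two categories, while the case analysis shows every set lies in one. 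The only delicate points are the boundary cases $i=1$ (where a reticulation set appears as a singleton leaf rather than $\{n+i-1\}$) and $i=|\mathcal{C}|$ (the root, a leaf set or parent set), which I would dispatch by the same tree-child computation.
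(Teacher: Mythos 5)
Your proposal is correct and takes essentially the same approach as the paper's proof: it deploys the same three ingredients (the spinal condition $n+i-1\in C_i$, the binary size/multiplicity dichotomy of Theorem~\ref{t:binarycover}, and the tree-child condition of Theorem~\ref{t:tcn}) in the same order --- a singleton $C_i$ forces $C_{i+1}$ to contain $n+i$ and a leaf, a non-leaf second element of a size-two set is doubly occurring and so forces its set $C_j$ to be a singleton (hence the first element of a Category~1 pair), and the two parents of each reticulation give the Category~1--Category~3 correspondence. The only step worth spelling out explicitly (as the paper does) is that in a parent set $\{n+i-1,n+j\}$ tree-childness forces $n+i-1$ to occur uniquely in $\C$, which is literally what Category~3 demands; this is a one-line consequence of your thin/thick bookkeeping.
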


\begin{proof}
We first show that if there exists some set $C_i$ that is a singleton (necessarily $\{n+i-1\}$ or $i=1$ and $C_1$ contains a singleton leaf), then $C_{i+1}$ contains $n+i-1$ and a leaf. Since $|C_i| = 1$, then $n + i$ appears twice in $\C$, because $N$ is binary. As $N$ is spinal, $C_{i+1}$ must be one of the two sets that contain $n+i$. However, as $N$ is a TCN, by Theorem \ref{t:tcn}, $C_{i+1}$ must contain an integer that appears only once in the cover, so $C_{i+1}$ cannot be a singleton. Additionally, considering the other element of $C_{i+1}$, it cannot be $n+j-1$ for any $2<j<i+1$, as by the spinal property these are certainly already contained in $C_j$. It follows that the other element of $C_{i+1}$ is a leaf.

We now suppose that $|C_i|=2$ and is not the second element of a pair in Category 1. As $N$ is spinal, $C_i$ contains $(n+i-1)$ or $i=1$ and $C_1$ is two leaves, so now we consider the other element. If it is a leaf, then $C_i$ is in Category $2$. To prove the statement, it therefore suffices to show if $C_i$ is not in Category 2, then the other element, which must be $n+j$ for $0<j<i-1$ as it is not a leaf, has the property that $C_j$ is the first element of a pair in Category 1. As $N$ is a spinal network and $n+j$ is not a leaf, $n+j$ must appear in some other set in $\C$ that precedes $C_i$ in the labelling order. This implies that $C_i$ uniquely contains $n+i-1$, as $N$ is a TCN. Furthermore, by Theorem \ref{t:binarycover}, as $n+j$ appears twice in $\C$, we know $|C_j|=1$. By the first part of this lemma, as $C_j$ is a singleton it must be the first element of a pair in Category 1. 

Finally, observe that if we have some pair in Category 1, say $C_j,C_{j+1}$, then $n+j$ is contained in $C_{j+1}$ due the spinal condition, and due to Theorem \ref{t:binarycover} is repeated in some subsequent set $C_i$. In $C_i$ there must be a unique element, implying that the remaining element, necessarily $n+i-1$, is unique. This is exactly the conditions required for a set in Category 3. The theorem follows. 
\end{proof}

Define the following classes of interior vertices.  

\begin{Definition}
   Let $v$ be an interior vertex or root vertex of a binary spinal network. Each such vertex will be an endpoint of one edge that is not part of the spine, leading to some vertex $w$, which we term the \textit{non-spinal adjacent vertex} of $v$. 
   \begin{itemize}
       \item If $w$ is a leaf, we call $v$ an $L$-vertex;
       \item if $w$ is a vertex of the spinal path prior to $v$ we call $v$ a $P$-vertex; and 
       \item if $w$ is a vertex of the spinal path subsequent to $v$ we call $v$ an $R$-vertex. 
   \end{itemize}
   
\end{Definition}

Note that the non-spinal adjacent vertex cannot be immediately before or after $v$ along the spinal path, as we do not permit parallel edges. 
With these definitions, we can equivalently re-state Theorem \ref{t:spinaltcn} as follows:

\begin{Theorem}
\label{t:spinaltcn.RLP}
Let $N$ be a binary spinal TCN with expanding cover $\C$. Then a set $C_i\in\C$ falls into one of the following three mutually exclusive categories:

\begin{enumerate}
    \item It is part of a consecutive pair of sets $C_i,C_{i+1}$ such that $C_i$ corresponds to an $R$-vertex and $C_{i+1}$  corresponds to an $L$-vertex; or
    \item $C_i$ corresponds to an $L$-vertex and $C_{i-1}$ does not correspond to an $R$-vertex; or
    \item $C_i$ corresponds to a $P$-vertex that has an edge to the first element of a pair in Category 1.
\end{enumerate}
Indeed, for each pair of sets in Category $1$, there is a corresponding set in Category $3$.
\end{Theorem}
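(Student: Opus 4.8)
The plan is to establish that Theorem \ref{t:spinaltcn.RLP} is not a new result requiring independent proof, but rather a direct translation of the already-proven Theorem \ref{t:spinaltcn} into the vertex-centric language of $L$-, $P$-, and $R$-vertices. The entire argument reduces to verifying that the three numerical/set-theoretic conditions in Theorem \ref{t:spinaltcn} correspond exactly, category by category, to the three vertex-type conditions stated here. Accordingly, the first step is to make precise the dictionary between a set $C_i$ in the cover and the type of the interior vertex it labels. The key observation is that $C_i$ consists of the labels of the children of the $i$-th interior vertex $v_i$ along the spine: one of these children is the next vertex down the spine (contributing an element $n+i-1$ whenever $v_i$ is not the bottommost vertex, by the spinal condition), and the other child is the non-spinal adjacent vertex $w$. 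Thus the nature of $w$ --- leaf, earlier spinal vertex, or later spinal vertex --- is read off directly from the second element of $C_i$.

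Next I would verify the dictionary in each of the three cases. If the non-spinal adjacent element of $C_i$ is a leaf, then $v_i$ is an $L$-vertex by definition; this matches the condition ``$C_i$ contains $n+i-1$ and a leaf'' appearing in Categories 1 (second element) and 2. If the non-spinal element is $n+j$ with $j<i-1$, then $w=v_{j+1}$ is a spinal vertex strictly prior to $v_i$, so $v_i$ is a $P$-vertex; this matches Category 3's ``$C_i$ uniquely contains $n+i-1$ and some element $n+j$.'' Finally, a singleton $C_i=\{n+i-1\}$ (or the $i=1$ singleton leaf case) corresponds to a vertex whose only recorded child is the next spinal vertex, meaning its non-spinal child is a \emph{later} spinal vertex --- precisely an $R$-vertex. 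Here one must invoke Theorem \ref{t:binarycover}: since $|C_i|=1$, the integer $n+i$ appears twice in $\C$, and that repeated element being a spinal vertex subsequent to $v_i$ is exactly what makes $v_i$ an $R$-vertex rather than a $P$- or $L$-vertex.

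With the dictionary in hand, the three categories translate mechanically. In Category 1 the first set $C_i$ is a singleton ($R$-vertex) and $C_{i+1}$ contains $n+i$ and a leaf ($L$-vertex), giving the stated consecutive $R$–$L$ pair. In Category 2 the set $C_i$ contains a leaf ($L$-vertex) while the condition $C_{i-1}\ne\{n+i-2\}$ says $C_{i-1}$ is not a singleton, i.e.\ $v_{i-1}$ is not an $R$-vertex. In Category 3 the $P$-vertex condition is matched as above, and the clause that $C_j$ is the first element of a Category 1 pair translates into ``has an edge to the first element of a pair in Category 1.'' The final sentence --- that each Category 1 pair corresponds to a Category 3 set --- is inherited verbatim from Theorem \ref{t:spinaltcn}, since the correspondence there was between the same underlying sets.

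The main obstacle, and the only point requiring genuine care rather than translation, is the handling of the boundary case $i=1$ and the orientation of the repeated element in the singleton case. For $i=1$ the ``next spinal vertex'' convention and the element $n+i-1=n$ need separate checking, since $C_1$ may be a singleton leaf (bottommost vertex of the spine) or two leaves (a cherry), and one must confirm these map to the intended $R$- and $L$-vertex descriptions without a spurious $n+i-1$ term. Likewise, establishing that the twice-appearing element $n+i$ in a singleton $C_i$ is genuinely a \emph{subsequent} spinal vertex (so that $v_i$ is an $R$-vertex and not misclassified) relies on the spinal labelling order together with Theorem \ref{t:binarycover}, and this is where the proof should spend its attention. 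Everything else is a direct substitution of vocabulary.
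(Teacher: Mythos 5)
Your proposal is correct and matches the paper's treatment: the paper gives no separate proof of this theorem, introducing it with ``we can equivalently re-state Theorem~\ref{t:spinaltcn} as follows,'' and your set-to-vertex dictionary (singleton $\leftrightarrow$ $R$-vertex, leaf-containing $\leftrightarrow$ $L$-vertex, $n+j$-containing $\leftrightarrow$ $P$-vertex) is exactly that translation made explicit, including the boundary cases at $i=1$. One slip worth fixing: the element $n+j$ is the label of the vertex whose child-set is $C_j$, so the non-spinal adjacent vertex in your Category~3 dictionary is $v_j$, not $v_{j+1}$ --- as written it would attach the $P$-vertex to the $L$-vertex of the Category~1 pair rather than to its $R$-vertex (and contradict your own convention that $n+i-1$ labels $v_{i-1}$), although your final paragraph then translates Category~3 correctly, so the conclusion is unaffected.
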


\begin{Observation}
    Considering the rules of the finite sequences in the proof sketch in Section \ref{s:proofsketch}, the characterisation presented in Theorem \ref{t:spinaltcn.RLP} is equivalent to adding a new rule (the first of three we will introduce): \textbf{(R4a)} $R_i$ must always be followed immediately by $L$. 
    
    Our enumeration proof will therefore count how many ways to arrange sub-subsequences of the forms $R_iL,L,P_i$ in which there are $k$ of the form $R_iL$, $n-1-k$ of the form $L$ and $k$ of the form $P_i$, to make our subsequence, while still enforcing the rules mentioned in Section \ref{s:proofsketch}. This will be achieved by mapping to an appropriate set.
\end{Observation}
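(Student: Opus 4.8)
The plan is to prove the asserted equivalence by translating the vertex-type categories of Theorem~\ref{t:spinaltcn.RLP} directly into the letter sequence of Section~\ref{s:proofsketch}, keeping careful track of orientation. The sequence $C_1,\dots,C_{|\C|}$ is read from the leaf end of the spine to the root, so ``followed immediately'' means one step closer to the root; since edges point away from the root, the unique \emph{child} of $C_i$ along the spine is $C_{i-1}$. Under the dictionary of the preceding Definition, a reticulation is an $R$-vertex, a parent-of-reticulation is a $P$-vertex, and a vertex with an off-spine leaf child is an $L$-vertex, so the letters $R_i,P_i,L$ record exactly these three types. The goal is to show that, among sequences already satisfying R1--R3, the tree-child ones are exactly those satisfying R4a.

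For the forward direction I would read off Category~1: an $R$-vertex is neither an $L$-vertex nor a $P$-vertex, hence lies in neither Category~2 nor Category~3, so it must occur as the first element of a Category~1 pair $C_i,C_{i+1}$ whose second element is an $L$-vertex. Thus every $R_i$ is immediately followed by an $L$, which is precisely R4a; and since Categories~2 and~3 merely sort the remaining $L$- and $P$-vertices without constraining their order beyond R1--R3, no further rule is imposed. The closing sentence of Theorem~\ref{t:spinaltcn.RLP}, pairing each Category~1 block with a Category~3 set, is just the bookkeeping identity that the number of $R$'s equals the number of $P$'s, already built into the prescribed letter multiplicities.

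For the converse — that R1--R3 together with R4a force a sequence to encode a binary spinal TCN — I would argue directly from the tree-child condition via the degree structure of the three vertex types. An $L$-vertex always has a leaf, hence a tree-vertex, child and can never violate tree-child; a reticulation ($R$-vertex) has as its only child its spine-child $C_{i-1}$; and a $P$-vertex has children $C_{i-1}$ and an off-spine reticulation. Consequently the only possible violations occur at $R$- and $P$-vertices, and each occurs exactly when the spine-child $C_{i-1}$ is itself a reticulation, i.e.\ iff some $R$-vertex is immediately followed by an $R$- or $P$-vertex. Rule R4a forbids precisely this configuration, so a binary spinal network satisfying R1--R3 is tree-child if and only if it satisfies R4a. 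This degree analysis — in particular verifying that $P$-vertices impose no constraint beyond the one already imposed by $R$-vertices, so that both collapse to the single rule R4a — is the step I expect to require the most care, together with keeping the leaf-to-root orientation straight.

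Finally, for the counting reformulation, R4a lets me glue each letter $R_i$ to the $L$ immediately following it, producing $k$ indivisible blocks $R_iL$; these consume $k$ of the $n-1$ copies of $L$, leaving $n-1-k$ free $L$'s together with the $k$ letters $P_i$. Since R4a guarantees an $L$ (never a $P_i$) immediately after each $R_i$, rule R3 becomes automatic, so the enumeration reduces to counting arrangements of the multiset consisting of $k$ blocks $R_iL$, $n-1-k$ indistinguishable $L$'s, and $k$ letters $P_i$, subject only to R1 and R2. This is the object the next step will place in bijection with a known family, after which Lemma~\ref{l:shapetonet} converts the shape count into a network count.
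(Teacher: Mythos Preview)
Your argument is correct. The paper treats this Observation as essentially self-evident and provides no proof at all, so your write-up is considerably more detailed than the paper's treatment. Your forward direction (reading R4a off Category~1) is exactly the intended content; your converse, arguing directly from the degree structure of $L$-, $R$-, and $P$-vertices to show that tree-child fails precisely when an $R$-vertex is immediately followed by an $R$- or $P$-vertex, is a clean independent verification that the paper does not spell out. The only remark is that your converse technically establishes ``R1--R3 $+$ R4a $\Leftrightarrow$ binary spinal tree-child'' rather than ``R1--R3 $+$ R4a $\Leftrightarrow$ the three categories of Theorem~\ref{t:spinaltcn.RLP}'' directly, but since Theorem~\ref{t:spinaltcn.RLP} is stated as a characterisation of binary spinal TCNs this is equivalent. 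Your observation that R3 becomes redundant under R4a, and the block decomposition into $k$ copies of $R_iL$, $n-1-k$ free $L$'s, and $k$ copies of $P_i$, are exactly what the paper uses (implicitly) to set up the bijection with $\mathcal{B}_{n-1,k}$ in the next theorem.
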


Denote the set of binary spinal TCNs on $n$ leaves with $k$ reticulations by $\mathcal{STC}_{n,k}$, and the set of expanding covers of such networks by $C(\mathcal{STC}_{n,k})$. Denote the set of partitions of $n+k$ into $n-k$ sets of size $1$ and $k$ sets of size $2$ by $\mathcal{B}_{n,k}$. The sequence $|\mathcal{B}_{n-1,k}|$ is well-known, appearing off-by-one as sequence A001498 \cite{oeis}, the coefficients of the Bessel polynomial, and having exact formula

    \begin{equation}
    \label{e:bessel}
        |\mathcal{B}_{n-1,k}| = \frac{(n-1+k)!}{2^k(n-1-k)!k!}.
    \end{equation}

\begin{Theorem}
\label{t:binspinTCN}
Let $\mathcal{STC}_{n,k}$ be the set of binary spinal TCNs on $n$ leaves with $k$ reticulations. Then, if $n>1$,
\begin{equation}\label{e:STCnk.count}
|\mathcal{STC}_{n,k}|=n!|B_{n-1,k}|-\frac{n!}{2}|B_{n-2,k}|,
\end{equation}
where $|\mathcal{STC}_{1,k}|=1$ and $|B_{0,k}|=0$. Equivalently,
\begin{equation}\label{e:STCnk.countdirect}
|\mathcal{STC}_{n,k}|=\frac{n!(n-2+k)!(n-1+3k)}{2^{k+1}k!(n-1-k)!}.
\end{equation}

\end{Theorem}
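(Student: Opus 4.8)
The plan is to establish the first identity \eqref{e:STCnk.count} via Lemma \ref{l:shapetonet}, and then obtain the closed form \eqref{e:STCnk.countdirect} by a routine algebraic simplification. Tree-child-ness, the spinal property and binarity are all invariant under relabelling leaves, and (as discussed before Lemma \ref{l:shapetonet}) cherry reduction and re-cherrying act only on the leaf-pair furthest from the root along the spine and are mutually inverse, so the class $\mathcal{STC}$ satisfies the hypotheses of Lemma \ref{l:shapetonet}. It therefore suffices to prove that $|S(\mathcal{STC}_{n,k})| = |\mathcal{B}_{n-1,k}|$; substituting this and $|S(\mathcal{STC}_{n-1,k})| = |\mathcal{B}_{n-2,k}|$ into Lemma \ref{l:shapetonet} yields \eqref{e:STCnk.count} at once.

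The heart of the argument is thus a bijection between binary spinal TCN shapes and the set partitions enumerated by $\mathcal{B}_{n-1,k}$. By the encoding of Section \ref{s:proofsketch} together with the Observation following Theorem \ref{t:spinaltcn.RLP}, a shape in $S(\mathcal{STC}_{n,k})$ corresponds to a word in $\{L,R_1,\dots,R_k,P_1,\dots,P_k\}$ with $n-1$ occurrences of $L$ and one of each remaining letter, obeying R1, R2, R3 and R4a. I would first invoke R4a to glue each $R_i$ to the $L$ that must immediately follow it, forming a block $R_iL$. This leaves $k$ blocks, $n-1-k$ free $L$'s and $k$ letters $P_i$, for $n-1+k$ items in total, and it renders R3 automatic, since nothing can sit between $R_i$ and its trailing $L$. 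The only surviving constraints are R1 (the blocks appear in increasing index order) and R2 (each $P_i$ appears after the block $R_iL$).

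The bijection to $\mathcal{B}_{n-1,k}$ then maps such an arrangement of the $n-1+k$ items to the set partition of $[n-1+k]$ in which, for each $i$, the positions of the block $R_iL$ and of $P_i$ form a $2$-block (well-defined, with the block position strictly smaller by R2), and each free $L$ forms a singleton; this yields exactly $k$ pairs and $n-1-k$ singletons. Conversely, given a partition of $[n-1+k]$ into $k$ pairs and $n-1-k$ singletons, I place a block at the smaller element and a $P$ at the larger element of each pair and a free $L$ at each singleton, index the blocks $1,\dots,k$ from left to right (forcing R1), and assign each $P$ the index of its partner block (forcing R2). These two maps are mutually inverse, giving $|S(\mathcal{STC}_{n,k})| = |\mathcal{B}_{n-1,k}|$.

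Finally, I would substitute the Bessel-coefficient formula \eqref{e:bessel} for $|\mathcal{B}_{n-1,k}|$ and $|\mathcal{B}_{n-2,k}|$ into \eqref{e:STCnk.count}, factor out $\frac{n!(n-2+k)!}{2^{k+1}k!(n-1-k)!}$, and use $(n-1-k)(n-2-k)! = (n-1-k)!$; the resulting bracket collapses to $2(n-1+k)-(n-1-k) = n-1+3k$, which gives \eqref{e:STCnk.countdirect}. The base case $n=1$ (with the convention $|B_{0,k}|=0$) is checked directly. I expect the bijection—specifically verifying that the left-to-right indexing of blocks faithfully encodes R1 and R2 while R4a renders R3 vacuous—to be the step requiring the most care, whereas the algebraic reduction and the verification of the closure hypotheses of Lemma \ref{l:shapetonet} are routine.
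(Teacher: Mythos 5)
Your proposal is correct and follows essentially the same route as the paper: both encode a binary spinal TCN shape via the spine (the paper's Category 1/2/3 vertex labelling is exactly your glued $R_iL$ blocks, free $L$'s, and $P_i$'s), both identify shapes with elements of $\mathcal{B}_{n-1,k}$, and both finish with Lemma \ref{l:shapetonet} plus the substitution of Equation \eqref{e:bessel}. Your write-up is somewhat more explicit than the paper's (spelling out the inverse map and the closure hypotheses of Lemma \ref{l:shapetonet}), but the underlying argument is the same.
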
 

\begin{proof}
     We intend to find a surjection $\varphi$ between $C(\mathcal{STC}_{n,k})$ and $\mathcal{B}_{n-1,k}$ that will allow us to enumerate $\mathcal{STC}_{n,k}$.

     Suppose we have a binary spinal TCN. By definition, it contains a spine, which we recall to be a path from a leaf to the root that traverses all non-leaf vertices, which is necessarily unique up to the (one or two) choices of initial leaf. Ignoring the initial leaf, label the internal vertices/pairs of internal vertices sequentially according to the path, except at vertices in Category 1 of \ref{t:spinaltcn}, in which the label is repeated, and call the labelling $\mathcal{L}$. An example of this labelling may be seen in Figure \ref{fig:tcn}, with the type of circle indicating which category each vertex belongs to (plain for category $1$, bold for $2$ and dashed for $3$). In a binary network on $n$ leaves with $k$ reticulations, there are $n-1+2k$ internal vertices, and so $n-1+k$ total labels in $\mathcal{L}$. The surjection is then achieved by mapping a network to the element $P \in \mathcal{B}_{n-1,k}$ for which the Category 2 vertices correspond to singletons containing their label, and the linked Category 1 pairs with their corresponding Category 3 vertices (viz. the final sentence of the statement of Theorem \ref{t:spinaltcn}) correspond to the parts of size two containing their labels.

     However, note that this bijection was determined entirely independently of leaf labels, and thus the image of an element under $\varphi$ is only unique up to tree shape. It follows that $\mathcal B_{n-1,k}$ counts exactly the network shapes of elements of $\mathcal{STC}_{n,k}$, and to find $|\mathcal{STC}_{n,k}|$ we may appeal to Lemma \ref{l:shapetonet}, resulting in the statement of the theorem. We can find the equivalent direct formulation by substituting in Equation \eqref{e:bessel} appropriately.
\end{proof}

\begin{figure}[ht]
    \centering
    \begin{tikzpicture}[dot/.style={circle,fill=white,radius=3pt,inner sep = 3pt}] 
        \draw (0, 0) -- (7,7);
        \draw (2, 2) -- (3,1);
        \draw (5,5) -- (6,4);
        \draw (7,7) -- (8,6);
        \draw [bend left=45] (1,1) to (3,3);
        \draw [bend left=45] (4,4) to (6,6);
        
        \node[draw,dot] at (0, 0)   (a) {};
        \node[draw,dot] at (1, 1)   {1};
        \node[draw,dot] at (2, 2)   (c) {1};
        \node[draw,dot,dashed] at (3, 3)   (d) {2};
        \node[draw,dot] at (4, 4)   (e) {3};
        \node[draw,dot] at (5,5)   (f) {3};
        \node[draw,dot,dashed] at (6,6)   (g) {4};
        \node[draw,dot,very thick] at (7,7)   (h) {5};
        \node[draw,dot] at (3,1)   (l1) {};
        \node[draw,dot] at (6,4)   (l2) {};
        \node[draw,dot] at (8,6)   (l3) {};
    \end{tikzpicture}
    \caption{An example of a binary spinal tree-child network with interior vertices labelled according to the tree-child labelling scheme described in Theorem~\ref{t:binspinTCN}. Each vertex is drawn according to which category they are in. Category $1$ are plain circles, Category $2$ are bold, and Category $3$ are dashed.} 
    \label{fig:tcn}
\end{figure}
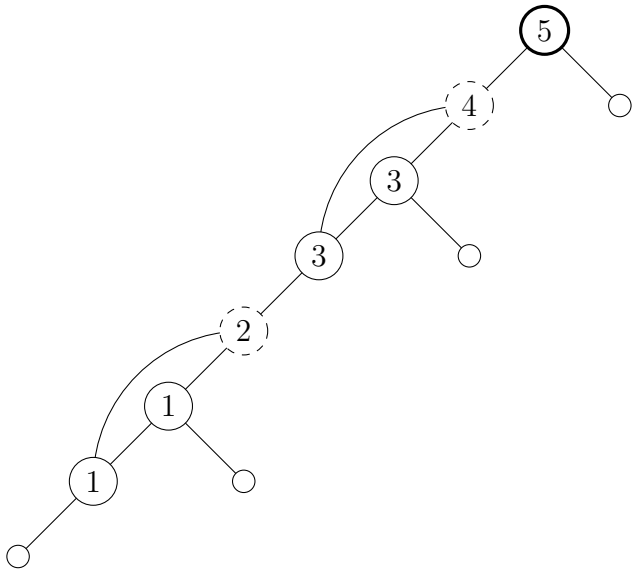

\begin{Example}
    Take the network shape depicted in Figure \ref{fig:tcn}. In this case we have a non-spinal edge connecting a Category $1$ vertex labelled $1$ to a Category $3$ vertex labelled $2$. Similarly, the label $3$ is connected to the label $4$. Finally, there is a single Category $2$ vertex, labelled $5$. Therefore, the partition this network shape is mapped to is $12|34|5$.
\end{Example}

The sequence $|\mathcal{STC}_{n,k}|$ does not appear to be in the OEIS, so we have included some small values in Table \ref{tab:tcncount}. However, as $k \le n-1$ due to $\mathcal{STC}_{n,k}$ consisting of tree-child networks, we can find further associated sequences. In particular, we have the following results.

\begin{Theorem}
Let $\mathcal{STC}_n$ be the set of binary spinal TCNs on $n$ leaves, and let $\mathcal{STCS}_n$ be the set of shapes of binary spinal TCNs on $n$ leaves. Then

\begin{equation}\label{e:STCn.count}
|\mathcal{STC}_n| = \sum_{k=0}^{n-1} |\mathcal{STC}_{n,k}|,    
\end{equation}
and
\[|\mathcal{STCS}_n| = \sum_{k=0}^{n-1} |\mathcal{B}_{n-1,k}|.\]

\end{Theorem}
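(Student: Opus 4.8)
The plan is to prove both identities by partitioning the relevant sets according to the number of reticulations $k$ and then invoking results already established in this section.

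First I would prove \eqref{e:STCn.count}. Every network in $\mathcal{STC}_n$ has a well-defined number of reticulation vertices, so $\mathcal{STC}_n$ is the disjoint union $\bigsqcup_{k \ge 0} \mathcal{STC}_{n,k}$, and hence $|\mathcal{STC}_n| = \sum_{k \ge 0} |\mathcal{STC}_{n,k}|$. It then remains only to show that the summation terminates at $k = n-1$, i.e.\ that $\mathcal{STC}_{n,k} = \emptyset$ whenever $k > n-1$. I would deduce this from the encoding described in Section~\ref{s:proofsketch} together with rule \textbf{(R4a)}: by the Observation preceding Theorem~\ref{t:binspinTCN}, a shape in $\mathcal{STC}_{n,k}$ corresponds to an arrangement of $k$ blocks of the form $R_iL$, together with $n-1-k$ lone $L$'s and $k$ blocks $P_i$. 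Since the number of lone $L$'s cannot be negative, one needs $n-1-k \ge 0$, forcing $k \le n-1$. (Equivalently, $|\mathcal{B}_{n-1,k}|$ vanishes for $k > n-1$, as a partition of $n-1+k$ elements into $n-1-k$ singletons and $k$ pairs requires $n-1-k \ge 0$.) The degenerate case $n=1$ is covered by the boundary conventions stated in Theorem~\ref{t:binspinTCN}.

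For the second identity I would run the same argument one level up, at the level of shapes. The relation $\sim_S$ only permutes leaf labels and therefore preserves the underlying directed graph up to leaf relabelling; in particular it preserves the number of reticulations. Consequently $\mathcal{STCS}_n = S(\mathcal{STC}_n)$ decomposes as the disjoint union $\bigsqcup_{k=0}^{n-1} S(\mathcal{STC}_{n,k})$, so that $|\mathcal{STCS}_n| = \sum_{k=0}^{n-1} |S(\mathcal{STC}_{n,k})|$. The proof of Theorem~\ref{t:binspinTCN} already establishes that the surjection $\varphi$ onto $\mathcal{B}_{n-1,k}$ is determined independently of the leaf labels, whence $|S(\mathcal{STC}_{n,k})| = |\mathcal{B}_{n-1,k}|$. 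Substituting this equality termwise into the sum yields $|\mathcal{STCS}_n| = \sum_{k=0}^{n-1} |\mathcal{B}_{n-1,k}|$, as required.

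The only genuine content beyond this bookkeeping is pinning down the range of $k$, which I expect to be the main (if modest) obstacle: one must be certain that the sum's upper limit is exactly $n-1$, with no contribution lost at the top end and no spurious term added. I regard the combinatorial argument via rule \textbf{(R4a)} as the cleanest way to secure this, since it simultaneously explains why $|\mathcal{B}_{n-1,k}| = 0$ for $k > n-1$ and recovers the classical tree-child bound on the number of reticulations.
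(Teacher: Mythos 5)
Your proof is correct and takes essentially the same approach as the paper: the paper offers no separate proof of this theorem, treating it as an immediate consequence of partitioning by $k$, the bound $k \le n-1$ (which it justifies simply by noting that $\mathcal{STC}_{n,k}$ consists of tree-child networks), and the fact from the proof of Theorem~\ref{t:binspinTCN} that $\mathcal{B}_{n-1,k}$ counts the shapes in $\mathcal{STC}_{n,k}$. Your only deviation is deriving the bound $k \le n-1$ from the encoding rule \textbf{(R4a)} rather than citing the standard tree-child reticulation bound, which is a slightly more self-contained justification of the same fact.
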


The latter of these two sequences appears in the OEIS as A001515 \cite{oeis}, while the former does not appear to have a sequence in the OEIS. The first elements of the sequence $|\mathcal{STC}_n|$ can be found in the Total row in Table \ref{tab:tcncount}.

\begin{table}[h]
    \centering

\begin{tabular}{|l|c|c|c|c|c|c|c|c|}
    \hline
    \textbf{$k\backslash n$} & \textbf{1} & \textbf{2} & \textbf{3} & \textbf{4} & \textbf{5} & \textbf{6} & \textbf{7} & \textbf{8} \\ \hline
    \textbf{0} & 1 & 1 & 3 & 12 & 60 & 360 & 2520 & 20160 \\ \hline
    \textbf{1} &  & 2 & 15 & 108 & 840 & 7200 & 68040 & 705600 \\ \hline
    \textbf{2} &  &  & 18 & 324 & 4500 & 59400 & 793800 & 11007360 \\ \hline
    \textbf{3} &  &  &  & 360 & 11700 & 264600 & 5292000 & 101606400 \\ \hline
    \textbf{4} &  &  &  &  & 12600 & 642600 & 21432600 & 603288000 \\ \hline
    \textbf{5} &  &  &  &  &  & 680400 & 50009400 & 2305195200 \\ \hline
    \textbf{6} &  &  &  &  &  &  & 52390800 & 5239080000 \\ \hline
    \textbf{7} &  &  &  &  &  &  &  & 5448643200 \\ \hline
\textbf{Total} & 1 & 3 & 36 & 804 & 29700 & 1654560 & 129989160 & 13709545920 \\ \hline
\end{tabular}
    \caption{Values for $|\mathcal{STC}_{n,k}|$ where $1 \le n \le 8$ and $0 \le k < n$.}
    \label{tab:tcncount}
\end{table}

We also make the following observation regarding caterpillar trees.

\begin{Observation}    
The set of binary spinal tree-child networks on $n$ leaves with $0$ reticulations is exactly the set of caterpillar trees on $n$ leaves. By setting $k=0$ in Equation \eqref{e:bessel}, we see that $B_{n,0} =1$ for all $n$, and so can, by Theorem \ref{t:binspinTCN} (or by directly substituting into Equation \eqref{e:STCnk.countdirect}), retrieve the well-known fact that the number of caterpillar trees on $n$ leaves is exactly $n!/2$.
\end{Observation}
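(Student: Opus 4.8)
The plan is to establish the Observation in two independent pieces: first the set-theoretic identification of $\mathcal{STC}_{n,0}$ with the caterpillar trees, and then the numerical count, which will follow by a direct substitution into results already proved.

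For the set equality, I would argue as follows. Setting $k=0$ means the network has no reticulation vertices, hence no vertex of in-degree $\ge 2$, so the network is in fact a phylogenetic tree. In a tree every vertex is a tree vertex, so the requirement that every non-leaf vertex have a tree-vertex child is automatically satisfied; thus the tree-child condition imposes nothing in this case. What remains is a binary spinal tree, and by the fact recorded in the Preliminaries --- that the only spinal networks which are also phylogenetic trees are exactly the caterpillar trees --- such a network is a caterpillar tree. Conversely, every caterpillar tree on $n$ leaves is binary, has $0$ reticulations, possesses a spine (its internal path), and is trivially tree-child, so it belongs to $\mathcal{STC}_{n,0}$. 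This shows the two sets coincide.

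For the count, I would simply evaluate the formulas already derived. Substituting $k=0$ into Equation \eqref{e:bessel} gives $|\mathcal{B}_{n-1,0}| = (n-1)!/((n-1)!\,0!) = 1$ for every $n$, so every Bessel coefficient on the diagonal $k=0$ equals $1$. Feeding this into Equation \eqref{e:STCnk.count} of Theorem \ref{t:binspinTCN} yields, for $n>1$,
\[
|\mathcal{STC}_{n,0}| = n!\cdot 1 - \frac{n!}{2}\cdot 1 = \frac{n!}{2},
\]
which is the classical count of caterpillar trees. Alternatively one may substitute $k=0$ directly into Equation \eqref{e:STCnk.countdirect} and simplify the factorials: the factor $(n-2+k)!\,(n-1+3k) = (n-2)!\,(n-1) = (n-1)!$ cancels against $(n-1-k)! = (n-1)!$, leaving $n!/2^{1} = n!/2$.

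Since both parts are elementary, there is no genuine obstacle; the only point requiring care is the small-$n$ boundary. For $n=1$ the recursion is not applied (Theorem \ref{t:binspinTCN} records $|\mathcal{STC}_{1,k}|=1$ separately), and indeed a one-leaf caterpillar is unique rather than giving $1/2$; the value $n!/2$ is the well-known count only for $n\ge 2$, which is exactly the regime in which the recursion of Theorem \ref{t:binspinTCN} and the direct formula are valid. I would therefore phrase the count for $n\ge 2$, consistent with the hypotheses of the cited results, and treat the $n=1$ case by the separate convention already in place.
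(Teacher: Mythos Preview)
Your proposal is correct and matches the paper's own justification, which is embedded in the Observation itself: identify $\mathcal{STC}_{n,0}$ with caterpillar trees via the spinal-tree characterisation, then substitute $k=0$ into Equation~\eqref{e:bessel} and Theorem~\ref{t:binspinTCN} (or Equation~\eqref{e:STCnk.countdirect}) to obtain $n!/2$. Your additional remark on the $n=1$ boundary is a welcome clarification but not a departure from the paper's approach.
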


\subsection{The class of binary spinal stack-free networks}
\label{s:stackfree}

The enumeration of binary spinal stack-free networks proceeds in a similar manner to that for binary spinal tree-child networks. 

\begin{Definition}
    A network is \emph{stack-free} if every child of a reticulation vertex is a tree vertex.
\end{Definition}

Examples for $n=2$ leaves and $k=2$ reticulations are shown in Figure~\ref{f:binary.spinal.2.2}.
We begin with an analogous result to Theorem \ref{t:spinaltcn.RLP}, which is proved in a very similar way and so the proof is omitted.

\begin{Theorem}
\label{t:spinalsfn.covers}
Let $N$ be a binary spinal stack-free network with expanding cover $\C$. Then a set $C_i\in\C$ falls into one of the following four mutually exclusive categories:

\begin{enumerate}
    \item It is part of a consecutive pair of sets $C_i,C_{i+1}$ such that $C_i$ corresponds to an $R$-vertex and $C_{i+1}$ corresponds to  an $L$-vertex; or
    \item It is part of a consecutive pair of sets $C_i,C_{i+1}$ such that $C_i$ corresponds to an $R$-vertex and $C_{i+1}$ corresponds to  a $P$-vertex that has an edge to the first element of a pair in Categories $1$ or $2$ that is not $C_i$; or
    \item $C_i$ corresponds to an $L$-vertex and $C_{i-1}$ is not an $R$-vertex; or
    \item $C_i$ corresponds to a $P$-vertex that has an edge to the first element of a pair in Categories $1$ or $2$ and $C_{i-1}$ is not an $R$-vertex.
\end{enumerate}

Indeed, for each pair of sets in Category $1$, there is a corresponding $P$-vertex in Category $2$ or $4$.
\end{Theorem}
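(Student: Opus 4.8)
The plan is to mirror the proof of Theorem~\ref{t:spinaltcn}, replacing the tree-child hypothesis by the stack-free one; the entire difference is concentrated in what a reticulation's spine-parent is forced to be. I would first translate the stack-free condition into the language of the labelling order. As in the tree-child case, for a binary spinal network $C_i$ is the children-set of the $i$-th spine vertex $v_i$ (with $v_i$ the spine-parent of $v_{i-1}$ and $v_{|\C|}$ the root), and by Theorem~\ref{t:binarycover} the vertex $v_i$ is a reticulation (an $R$-vertex) precisely when $|C_i|=1$, i.e. when $n+i$ occurs twice in $\C$. Since the unique child of a reticulation $v_i$ is the preceding spine vertex $v_{i-1}$, the stack-free condition---no child of a reticulation is itself a reticulation---is equivalent to the statement that no two consecutive sets $C_i,C_{i+1}$ are both singletons, equivalently that no two $R$-vertices are adjacent on the spine.

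With this in hand, the first main step (the analogue of the opening paragraph of the proof of Theorem~\ref{t:spinaltcn}) is to analyse the successor of a singleton. Suppose $C_i$ is a singleton, so $v_i$ is an $R$-vertex and $n+i$ appears twice. By the spinal condition one of these occurrences lies in $C_{i+1}$, and by the translated stack-free condition $C_{i+1}$ cannot also be a singleton, so $|C_{i+1}|=2$ and $v_{i+1}$ is a tree vertex. Its second element is either a leaf, making $v_{i+1}$ an $L$-vertex and $(C_i,C_{i+1})$ a Category~$1$ pair, or some $n+j$ with $j<i$, making $v_{i+1}$ a $P$-vertex; in the latter case the absence of parallel edges (rule R3) forces the non-spinal child $v_j$ to be distinct from $v_i$, which, together with the next step showing $v_j$ is a reticulation, places $(C_i,C_{i+1})$ in Category~$2$ and supplies the clause ``that is not $C_i$''. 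This is exactly where stack-free is weaker than tree-child: tree-child forces the second element of $C_{i+1}$ to be a leaf (rule R4a), whereas stack-free permits it to be an earlier reticulation, and it is this extra possibility that produces Category~$2$.

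Next I would treat a set $C_i$ with $|C_i|=2$ whose predecessor $C_{i-1}$ is not a singleton, exactly as in the second paragraph of the proof of Theorem~\ref{t:spinaltcn}. The spinal condition gives $n+i-1\in C_i$; if the other element is a leaf then $v_i$ is an $L$-vertex and lies in Category~$3$, while if it is some $n+j$ then, since $n+j$ also appears in $C_{j+1}$ by the spinal condition (a set distinct from $C_i$, as $j+1=i$ would force a parallel edge), the element $n+j$ occurs twice, so $|C_j|=1$ by Theorem~\ref{t:binarycover}; thus $v_j$ is an $R$-vertex, hence by the first step the first element of a Category~$1$ or~$2$ pair, and $v_i$ falls into Category~$4$. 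Combining the two steps, every spine vertex is an $L$-, $R$-, or $P$-vertex, every $R$-vertex is the first element of a Category~$1$ or~$2$ pair, and every tree vertex is either the second element of such a pair (when preceded by an $R$-vertex) or lies in Category~$3$ or~$4$; since these defining conditions are pairwise incompatible, the four categories are mutually exclusive and exhaustive. For the closing clause I would observe that for a Category~$1$ pair $(C_i,C_{i+1})$ the label $n+i$ appears twice---once as the spine-parent occurrence in $C_{i+1}$ and once in the set corresponding to the non-spinal (hence $P$-vertex) parent of $v_i$---and this $P$-vertex lies in Category~$2$ or~$4$, giving the asserted correspondence.

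The main obstacle, and really the only departure from the tree-child argument, is the successor analysis in the first step: one must check that stack-free still rules out $|C_{i+1}|=1$ but now permits $|C_{i+1}|=2$ with a non-leaf second element, and then verify that the resulting $P$-vertex genuinely points to a distinct earlier reticulation (the ``not $C_i$'' clause), which is precisely where the no-parallel-edge rule R3 is required. Everything else is a routine transcription of the proof of Theorem~\ref{t:spinaltcn}, which is why the paper is content to omit it.
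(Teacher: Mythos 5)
Your proposal is correct and follows exactly the route the paper intends: the paper omits this proof as ``very similar'' to that of Theorem~\ref{t:spinaltcn}, and your argument is precisely that transcription, with the key modification correctly isolated --- the successor of a singleton $C_i$ must still be a $2$-set by the stack-free condition, but its second element may now be an earlier reticulation label $n+j$ (giving Category~$2$) rather than being forced to be a leaf as in the tree-child case. The supporting details (singletons $\Leftrightarrow$ $R$-vertices via Theorem~\ref{t:binarycover}, the spinal condition placing $n+i$ in $C_{i+1}$, the ``not $C_i$'' clause from the no-parallel-edge requirement, and the double occurrence of $n+i$ yielding the corresponding $P$-vertex in Category~$2$ or~$4$) are all handled correctly.
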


\begin{Observation}
    Considering the rules of the finite sequences in the proof sketch in Section \ref{s:proofsketch}, this is equivalent to adding a new rule that is a relaxed form of the rule for tree-child networks: \textbf{(R4b)} $R_i$ may be followed immediately by $L$ or $P_j$ for $i \ne j$, but cannot be followed by any $R_j$. Our enumeration proof will therefore count how many ways to arrange subsequences of the forms $R_iL,R_iP_j,L,P_i$ in which, for some $0<m<k$ there are $m$ of the form $R_iL$, $k-m$ of the form $R_iP_j$, $n-1-m$ of the form $L$ and $m$ of the form $P_i$, to make our sequence, while still enforcing the rules mentioned in Section \ref{s:proofsketch}. This will be achieved by mapping to an appropriate set.
\end{Observation}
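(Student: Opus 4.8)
The plan is to follow the proof of Theorem~\ref{t:spinaltcn} (equivalently its $R/L/P$ restatement, Theorem~\ref{t:spinaltcn.RLP}) almost line for line, replacing the tree-child input by the stack-free one. The first step is to record what ``stack-free'' says about the labelling order. By Theorem~\ref{t:binarycover} a set $C_i$ is a singleton precisely when the vertex it represents is a reticulation, in which case the spinal property forces $C_i=\{n+i-1\}$ for $i\ge 2$; that is, the reticulation's unique child is the preceding spine vertex. Hence ``every child of a reticulation is a tree vertex'' is equivalent to the statement that no two consecutive sets $C_{i-1},C_i$ are both singletons. This is the exact analogue of, but strictly weaker than, the tree-child input used in Theorem~\ref{t:spinaltcn}, and the extra freedom is what produces a fourth category.

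Next I would prove the analogue of the opening step of that proof. Suppose $C_i$ is a singleton (an $R$-vertex). Since $N$ is binary, $n+i$ appears twice, and since $N$ is spinal, $C_{i+1}$ contains $n+i$, so $C_{i+1}$ is one of the two occurrences of $n+i$. By the translation above $C_{i+1}$ cannot also be a singleton, so $|C_{i+1}|=2$, with spine element $n+i$ and one further element $x$. If $x$ is a leaf then $C_{i+1}$ is an $L$-vertex and $(C_i,C_{i+1})$ is a Category~1 pair. If instead $x=n+\ell$ is internal, then $x$ appears both in $C_{i+1}$ and (by the spinal property) in $C_{\ell+1}$, so $x$ has in-degree two and is a reticulation; thus $C_{i+1}$ is a $P$-vertex and $(C_i,C_{i+1})$ is a Category~2 pair. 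This is the sole point of divergence from the tree-child case, where the hypothesis forced $x$ to be a leaf. I would also note that $x\ne n+i$ by rule R3 (no parallel edges), which supplies the ``not $C_i$'' clause, and that $C_\ell$ is itself a singleton, so by the same step it opens a Category~1 or~2 pair, completing the description of Category~2.

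I would then dispose of the remaining sets, namely those $C_i$ with $|C_i|=2$ whose predecessor $C_{i-1}$ is not a singleton (not an $R$-vertex). Such a $C_i$ contains its spine element $n+i-1$ and one further element $x$, which as before is either a leaf (giving an $L$-vertex with non-$R$ predecessor, Category~3) or an internal reticulation $n+\ell$ with $\ell<i-1$ (giving a $P$-vertex with non-$R$ predecessor, Category~4). Mutual exclusivity is then a bookkeeping check on vertex type and on whether the predecessor is an $R$-vertex: an $R$-vertex always opens a Category~1 or~2 pair, an $L$-vertex is either the closing set of a Category~1 pair or else Category~3, and a $P$-vertex is either the closing set of a Category~2 pair or else Category~4. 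For the final sentence I would use that each reticulation has exactly one off-spine parent and each $P$-vertex carries exactly one off-spine edge, so reticulations and $P$-vertices are in bijection via that edge; the off-spine parent of the reticulation opening a Category~1 pair is a $P$-vertex distinct from the $L$-vertex closing that pair, hence lies in Category~2 or~4, and distinct Category~1 pairs yield distinct such $P$-vertices.

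The main obstacle I anticipate is not a single hard step but keeping the case analysis airtight in two places. First, the identification of the extra element of a size-two set as a reticulation exactly when it is internal relies on the fact that an internal off-spine child necessarily reappears as the spine element of a later set, hence has in-degree at least two; I would state this carefully together with the boundary case $i=1$ (a singleton $C_1$ is a leaf and so opens a Category~1 pair, while a size-two $C_1$ must be two leaves and is Category~3, since there is no earlier vertex available for a $P$-edge). Second, the two non-parallel-edge exclusions (``not $C_i$'' in Category~2, and ``$C_{i-1}$ not an $R$-vertex'' in Categories~3 and~4) must be phrased so that the four categories genuinely partition $\C$, and the correspondence in the last sentence must be upgraded to a genuine injection from Category~1 pairs into the $P$-vertices of Categories~2 and~4, since it is this count that the enumeration of the following section will rely on.
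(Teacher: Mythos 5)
Your proposal is correct and takes essentially the same route the paper intends: the paper omits the proof of Theorem~\ref{t:spinalsfn.covers} (from which this Observation is read off) on the grounds that it is proved ``in a very similar way'' to the tree-child case, and your line-by-line adaptation of that proof --- translating stack-freeness as ``no two consecutive singleton sets'', which opens up the extra Category~2 pairs (the $R_iP_j$ subsequences) alongside Category~1 (the $R_iL$ subsequences), with Categories~3 and~4 as the standalone $L$'s and $P_i$'s --- is exactly that argument. Your treatment of the boundary case $i=1$, the ``not $C_i$'' clause via the no-parallel-edge rule R3, and the injection from Category~1 pairs into $P$-vertices of Categories~2 or~4 all match what the omitted proof requires, so nothing essential is missing.
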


Denote the set of partitions of a set of $n$ distinct elements into $k$ parts by $S_2(n,k)$, noting that these are counted by the Stirling numbers of the second kind. A closed formula for these exists, namely

\begin{equation}\label{e:S2.count}
|S_2(n,k)| = \sum_{j=0}^k \frac{(-1)^{k-j}j^n}{(k-j)!j!}.  
\end{equation}

We can now show the main result of this section.
Let $\mathcal{SSF}_{n,k}$ be the class of binary spinal stack-free networks with $n$ leaves and $k$ reticulations. Then we have the following.
\begin{Theorem}\label{t:spinalsfn}
\[|\mathcal{SSF}_{n,k}|=n!\left|S_2(n-1+k,n-1)\right|-\frac{n!}{2}|S_2(n-2+k,n-2)|.\]

Equivalently,

\begin{equation}
    |\mathcal{SSF}_{n,k}|=n(n-1)^{n-1+k} - \frac{n!}{2} \sum_{j=0}^{n-2} \frac{(-1)^{n-j}(n-1+j)j^{n-2+k}}{(n-1-j)!j!}.
\end{equation}
\end{Theorem}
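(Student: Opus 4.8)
The plan is to follow exactly the proof structure laid out in Section~\ref{s:proofsketch} and mirror the tree-child argument of Theorem~\ref{t:binspinTCN}, but with rule (R4b) in place of (R4a). First I would establish a surjection $\varphi$ from the expanding covers of $\mathcal{SSF}_{n,k}$ onto a set of combinatorial objects counted by $|S_2(n-1+k,n-1)|$, by using the characterisation in Theorem~\ref{t:spinalsfn.covers}. The key observation is that, just as in the tree-child case, each network shape is encoded by a sequence built from the blocks $R_iL$, $R_iP_j$, $L$ and $P_i$, and that the resulting structure is a set partition. Specifically, the $n-1+2k$ internal vertices receive $n-1+k$ labels in the labelling $\mathcal{L}$ (each $R$-vertex sharing its label with the vertex it is paired with in Category~1 or~2). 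The pairing of each $R$-vertex to its associated $L$- or $P$-vertex, together with the $P$-vertices pointing back to earlier blocks, is precisely the data of a partition of the $n-1+k$ labels; I would argue that the admissible partitions are exactly those of an $(n-1+k)$-element set into $n-1$ blocks, which gives the Stirling-number count $|S_2(n-1+k,n-1)|$.

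The main work is to verify that this map is a genuine bijection onto set partitions, i.e.\ that the combinatorial rules (R1)--(R3) together with (R4b) translate faithfully into "partition into $n-1$ blocks" with no further constraint. Here I would show the two directions separately. In the forward direction, each of the $n-1$ "$L$-roots" (the Category~2 and Category~3 $L$-vertices, together with each $R$-vertex that is immediately followed by $L$) seeds one block, and the chain of $P$-vertices hanging off reticulations attaches each reticulation to exactly one block; the stack-free condition (no $R_iR_j$ subsequence) is what guarantees that every reticulation's child sits in a well-defined block rather than collapsing two reticulations together, so the blocks are exactly $n-1$ in number. For the inverse direction I would take an arbitrary partition of $[n-1+k]$ into $n-1$ parts and reconstruct the unique spinal stack-free shape realising it, checking that (R1)--(R3) and (R4b) are automatically satisfied. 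As in Theorem~\ref{t:binspinTCN}, this correspondence is independent of the leaf labels, so $|S_2(n-1+k,n-1)|$ counts exactly the network shapes $|S(\mathcal{SSF}_{n,k})|$.

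Having identified $|S(\mathcal{SSF}_{n,k})|=|S_2(n-1+k,n-1)|$, the count of actual networks then follows immediately from Lemma~\ref{l:shapetonet}, since the class of binary spinal stack-free networks is closed under leaf-label permutation, cherry reduction and re-cherrying (reducing a cherry removes one $L$-block and decrements $n$ without affecting the reticulations, and stack-freeness is preserved under both operations). Applying the lemma with $|S(\mathcal{SSF}_{n,k})|=|S_2(n-1+k,n-1)|$ and $|S(\mathcal{SSF}_{n-1,k})|=|S_2(n-2+k,n-2)|$ gives the first displayed formula. The equivalent closed form then comes from substituting Equation~\eqref{e:S2.count}: the leading term $n!\,|S_2(n-1+k,n-1)|$ simplifies to $n(n-1)^{n-1+k}$ because the $j=n-1$ summand in the Stirling formula dominates and the remaining summands cancel against the second term after reindexing, and collecting the two sums over a common index yields the stated combination with coefficient $(n-1+j)$.

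The hard part will be the bookkeeping in the bijection onto $S_2(n-1+k,n-1)$, and in particular pinning down that the correct parameter is exactly $n-1$ blocks of an $(n-1+k)$-element ground set. Under (R4b) a reticulation may now be followed by a $P$-vertex rather than forcing an $L$-vertex, so unlike the tight $R_iL$ pairing of the tree-child case the blocks can be larger than size two; I will need to be careful that the $P$-chains are correctly interpreted as merging labels into a single block, and that Category~2 and Category~4 $P$-vertices (which attach to earlier reticulation blocks) do not over- or under-count. This is also where I expect the most delicate checking to confirm that no additional admissibility constraint survives beyond "partition into $n-1$ parts", so that the Stirling number is the exact shape count and not merely an upper bound.
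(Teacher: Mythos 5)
Your proposal follows essentially the same route as the paper's own proof: labelling the spinal vertices with $n-1+k$ labels (repeating labels across the Category~1/2 pairs of Theorem~\ref{t:spinalsfn.covers}), merging labels into blocks via the $R$-vertex/$P$-vertex correspondence to obtain a partition of an $(n-1+k)$-element set into exactly $n-1$ parts, and then applying Lemma~\ref{l:shapetonet} and Equation~\eqref{e:S2.count} to get both displayed formulas. The only differences are cosmetic (your category numbering for the $L$-vertices is slightly off relative to Theorem~\ref{t:spinalsfn.covers}, and the summands \emph{combine} rather than cancel in the final algebra), and you are in fact more explicit than the paper about verifying surjectivity onto all partitions, which the paper asserts without detailed proof.
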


\begin{Observation}
As with the case of tree-child networks, note that when $k=0$, these expressions reduce to $\frac{n!}{2}$, which counts the number of caterpillar trees on $n$ leaves (this is easiest to see in the first expression, because $S_2(n-1,n-1)=1$).
\end{Observation}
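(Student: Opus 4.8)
The plan is to substitute $k=0$ directly into the first (Stirling-number) expression of Theorem~\ref{t:spinalsfn} and invoke a single elementary fact about the Stirling numbers of the second kind. Because the two displayed expressions in Theorem~\ref{t:spinalsfn} are equal — the second is obtained from the first by inserting the closed form \eqref{e:S2.count} and extracting the $j=n-1$ term — it suffices to evaluate either one, and the parenthetical remark in the statement correctly flags the first as the cleaner choice.

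First I would recall that $|S_2(m,m)|=1$ for every $m\ge 0$: partitioning an $m$-element set into $m$ nonempty blocks forces every block to be a singleton, so there is exactly one such partition. Taking $m=n-1$ and $m=n-2$ gives $|S_2(n-1,n-1)|=|S_2(n-2,n-2)|=1$.

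Setting $k=0$ in the first expression of Theorem~\ref{t:spinalsfn} then yields
\[|\mathcal{SSF}_{n,0}|=n!\,|S_2(n-1,n-1)|-\frac{n!}{2}\,|S_2(n-2,n-2)|=n!-\frac{n!}{2}=\frac{n!}{2},\]
as claimed. By the equivalence of the two formulas, the second expression also collapses to $n!/2$ at $k=0$; one could instead verify this directly by substituting \eqref{e:S2.count}, but that route requires evaluating the alternating sum $\sum_{j=0}^{n-2}(-1)^{n-j}(n-1+j)j^{n-2}/((n-1-j)!\,j!)$, and this is the only mildly delicate point — which is precisely why the statement steers us to the first expression. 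I do not expect any genuine obstacle, since the Stirling argument is a one-line substitution; the only place requiring care is avoiding the unnecessary detour through the alternating sum in the second formula.

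Finally I would close with the combinatorial sanity check that motivates the identity: a binary spinal network with $k=0$ reticulations is a tree, and a spinal tree is exactly a caterpillar, so $\mathcal{SSF}_{n,0}$ is the set of binary caterpillar trees on $n$ leaves. The value $n!/2$ therefore agrees with the well-known count of such trees, mirroring the analogous observation obtained in the tree-child case from $|B_{n,0}|=1$.
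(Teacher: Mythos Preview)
Your proposal is correct and follows exactly the route the paper indicates in the parenthetical remark: substitute $k=0$ into the first formula of Theorem~\ref{t:spinalsfn} and use $|S_2(m,m)|=1$ to obtain $n!-\frac{n!}{2}=\frac{n!}{2}$. The paper gives no further argument beyond that hint, so your write-up simply fleshes out what the Observation already sketches.
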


\begin{proof}
    The enumeration of binary spinal stack-free networks  proceeds similarly to the TCN case. Again we label vertices along the path of the spine (ignoring the initial leaf vertex), but this time repeating a label for the consecutive pairs in Categories $1$ \emph{and} $2$ (Theorem~\ref{t:spinalsfn.covers}). However, in this case we now construct equivalence classes of labels as follows. Define the equivalence $\sim$ by setting:
\begin{enumerate}
    \item $i \sim i$ for all labels $i$, and
    \item $i \sim j$ if $i$ labels an $R$-vertex with a corresponding $P$-vertex labelled by $j$.
\end{enumerate}

 We claim that the equivalence relation induces a surjection from $\mathcal{SSF}_{n,k}$ to $S_2(n-1+k,n-1)$. We note first that there are $n-1+2k$ internal vertices of a given element of $\mathcal{SSF}_{n,k}$, of which $k$ are reticulation vertices, which fall into Categories $1$ and $2$, and hence there are $n-1+k$ distinct labels.

Denote the number of (pairs of) sets in Categories $1$ and $3$ by $a$ and $b$ respectively. 
Each leaf appears only once per element of a Category, and so the $n-1$ leaves not on the spine can only appear in Categories $1$ and $3$; thus $a+b = n-1$. 
Similarly, the least element in an equivalence relation must be in Categories $1$ or $3$, so the number of  parts induced by the equivalence relation is also $a+b$. It follows that there will be $n-1$ parts induced by the equivalence relation. It is clear the partition induced by the equivalence class is unique for each network up to permutation of the leaves, and hence the number of network shapes in $\mathcal{SSF}_{n,k}$ is $|S_2(n-1+k,n-1)|$, which is a sequence appearing in the OEIS (off by 1) as A354977 \cite{oeis}. By Lemma \ref{l:shapetonet}, we obtain the first statement of the theorem. The closed form is then obtained via Equation \eqref{e:S2.count} and some algebraic manipulation.
\end{proof}

\begin{Example}
    Take the network shape depicted in Figure \ref{fig:sfn}. In this case we have a non-spinal edge connecting a pair of Category $1$ vertices labelled $1$ to a Category $2$ vertex labelled $2$, which are themselves connected to a Category $4$ vertex labelled $4$. Hence, by the equivalence relation, $1 \sim 2 \sim 4$. However, the label of the Category $3$ vertex is only equivalent to itself. Hence, this network shape maps to $124|3$ in the stack-free labelling scheme.
\end{Example}

\begin{figure}
    \centering
    \begin{tikzpicture}[dot/.style={circle,fill=white,radius=3pt,inner sep = 3pt}] 
        \draw (0, 0) -- (6,6);
        \draw (2, 2) -- (3,1);
        \draw (5,5) -- (6,4);
        \draw [bend left=45] (1,1) to (4,4);
        \draw [bend left=45] (3,3) to (6,6);
        
        \node[draw,dot] at (0, 0)   (a) {};
        \node[draw,dot] at (1, 1)   {1};
        \node[draw,dot] at (2, 2)   (c) {1};
        \node[draw,dot,very thick] at (3, 3)   (d) {2};
        \node[draw,dot,very thick] at (4, 4)   (e) {2};
        \node[draw,dot,dashed] at (5,5)   (f) {3};
        \node[draw,dot,dotted] at (6,6)   (g) {4};
        \node[draw,dot] at (3,1)   (l1) {};
        \node[draw,dot] at (6,4)   (l2) {};
    \end{tikzpicture}
    \caption{An example of a binary spinal stack-free network with interior vertices labelled according to the stack-free labelling scheme described in the proof of Theorem \ref{t:spinalsfn}. Each vertex is drawn according to which category they are in. Category $1$ are plain circles, Category $2$ are dashed, Category $3$ are bold, and Category $4$ are dotted. This example is mapped to the partition $124|3$.}
    \label{fig:sfn}
\end{figure}

This sequence does not appear to be in the OEIS, so we have included some small values in Table \ref{tab:ssfcount}. As binary spinal stack-free networks do not have a limit on the number of reticulations that may appear, there are infinitely many for each $n$, and so we cannot achieve an analogous result to the previous section where we found an expression for $|\mathcal{STC}_n|$ by summing over all $k$.

\begin{table}[h]
    \centering
\begin{tabular}{|l|c|c|c|c|c|c|c|}
    \hline
    \textbf{$k\backslash n$} & \textbf{2} & \textbf{3} & \textbf{4} & \textbf{5} & \textbf{6} & \textbf{7} & \textbf{8} \\ \hline
    \textbf{0} & 1 & 3 & 12 & 60 & 360 & 2520 & 20160 \\ \hline
    \textbf{1} & 2 & 15 & 108 & 840 & 7200 & 68040 & 705600 \\ \hline
    \textbf{2} & 2 & 39 & 516 & 6300 & 77400 & 987840 & 13265280 \\ \hline
    \textbf{3} & 2 & 87 & 1980 & 36600 & 630000 & 10689840 & 183738240 \\ \hline
    \textbf{4} & 2 & 183 & 6852 & 186060 & 4392360 & 97531560 & 2119763520 \\ \hline
    \textbf{5} & 2 & 375 & 22428 & 874440 & 27820800 & 797451480 & 21678148800 \\ \hline
    \textbf{6} & 2 & 759 & 71076 & 3911100 & 165367800 & 6049446480 & 203761071360 \\ \hline
    \textbf{7} & 2 & 1527 & 220860 & 16930200 & 940698000 & 43503324480 & 1801038919680 \\ \hline
    \textbf{8} & 2 & 3063 & 677892 & 71670060 & 5185980360 & 300797897400 & 15201573266880 \\ \hline
\end{tabular}
    \caption{The number of binary spinal stack-free networks $|\mathcal{SSF}_{n,k}|$, for $2 \le n \le 8$ and $0 \le k \le 8$, obtained using Theorem~\ref{t:spinalsfn}. }
    \label{tab:ssfcount}
\end{table}

\subsection{The class of binary spinal fully tree-sibling networks}
\label{s:fullytreesibling}

The tree-child result in Equation~\eqref{e:STCnk.count} can be extended to ``fully tree-sibling'' networks.

\begin{Definition}
    A network $N$ is \emph{fully tree-sibling} if and only if every sibling of a reticulation vertex is a tree vertex.
\end{Definition}

The class of fully tree-sibling networks, denoted $\mathcal{FTS}_{n,k}$ generalises the class of tree-child networks, as stacks are permitted, but are a subclass of tree-sibling networks, as tree-sibling networks only require each reticulation vertex to have at least one sibling that is a tree vertex.  Some examples are shown in Figure~\ref{f:binary.spinal.2.2}.

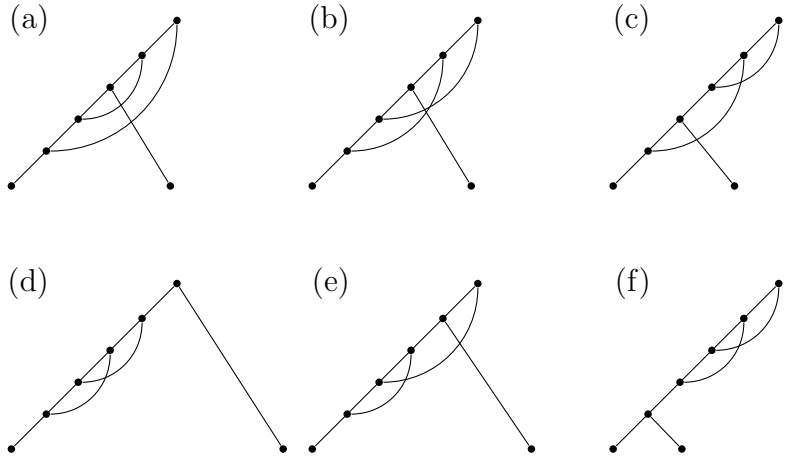
\begin{figure}[ht]
\centering
\begin{tikzpicture}[ 
     sdot/.style={circle,fill,radius=1pt,inner sep=1pt}] 

\node[sdot] (1) {};  
\node[sdot, above right=3cm of 1] (r) {};
\node[sdot,right=20mm of 1] (2) {};
\draw (r) to 
     node[pos=.2,sdot] (a) {} 
     node[pos=.4,sdot] (b) {} 
     node[pos=.6,sdot] (c) {} 
     node[pos=.8,sdot] (d) {} (1);
\draw (b)--(2);
\draw (r) to[out=-90,in=0] (d);
\draw (a) to[out=-90,in=0] (c);
\node[left=15mm of r] () {(a)};

\begin{scope}[xshift=40mm]
\node[sdot] (1) {};  
\node[sdot, above right=3cm of 1] (r) {};
\node[sdot,right=20mm of 1] (2) {};
\draw (r) to 
     node[pos=.2,sdot] (a) {} 
     node[pos=.4,sdot] (b) {} 
     node[pos=.6,sdot] (c) {} 
     node[pos=.8,sdot] (d) {} (1);
\draw (b)--(2);
\draw (r) to[out=-90,in=0] (c);
\draw (a) to[out=-90,in=0] (d);
\node[left=15mm of r] () {(b)};
\end{scope}

\begin{scope}[xshift=80mm]
\node[sdot] (1) {};  
\node[sdot, above right=3cm of 1] (r) {};
\node[sdot,right=15mm of 1] (2) {};
\draw (r) to 
     node[pos=.2,sdot] (a) {} 
     node[pos=.4,sdot] (b) {} 
     node[pos=.6,sdot] (c) {} 
     node[pos=.8,sdot] (d) {} (1);
\draw (c)--(2);
\draw (r) to[out=-90,in=0] (b);
\draw (a) to[out=-90,in=0] (d);
\node[left=15mm of r] () {(c)};
\end{scope}

\begin{scope}[yshift=-35mm]
\node[sdot] (1) {};  
\node[sdot, above right=3cm of 1] (r) {};
\node[sdot,right=35mm of 1] (2) {};
\draw (r) to 
     node[pos=.2,sdot] (a) {} 
     node[pos=.4,sdot] (b) {} 
     node[pos=.6,sdot] (c) {} 
     node[pos=.8,sdot] (d) {} (1);
\draw (r)--(2);
\draw (a) to[out=-90,in=0] (c);
\draw (b) to[out=-90,in=0] (d);
\node[left=15mm of r] () {(d)};
\end{scope}

\begin{scope}[yshift=-35mm,xshift=40mm]
\node[sdot] (1) {};  
\node[sdot, above right=3cm of 1] (r) {};
\node[sdot,right=28mm of 1] (2) {};
\draw (r) to 
     node[pos=.2,sdot] (a) {} 
     node[pos=.4,sdot] (b) {} 
     node[pos=.6,sdot] (c) {} 
     node[pos=.8,sdot] (d) {} (1);
\draw (a)--(2);
\draw (r) to[out=-90,in=0] (c);
\draw (b) to[out=-90,in=0] (d);
\node[left=15mm of r] () {(e)};
\end{scope}

\begin{scope}[yshift=-35mm,xshift=80mm]
\node[sdot] (1) {};  
\node[sdot, above right=3cm of 1] (r) {};
\node[sdot,right=8mm of 1] (2) {};
\draw (r) to 
     node[pos=.2,sdot] (a) {} 
     node[pos=.4,sdot] (b) {} 
     node[pos=.6,sdot] (c) {} 
     node[pos=.8,sdot] (d) {} (1);
\draw (d)--(2);
\draw (r) to[out=-90,in=0] (b);
\draw (a) to[out=-90,in=0] (c);
\node[left=15mm of r] () {(f)};
\end{scope}
\end{tikzpicture}
\caption{All binary spinal network shapes with $n=2$ leaves and $k=2$ reticulations.  Networks (a) and (b) are fully tree-sibling but not stack-free, while (c) is stack-free but not fully tree-sibling. Networks (d), (e), and (f) are neither fully tree-sibling or stack-free.  Note, there are no tree-child networks with $n=k=2$.  There are two leaf-labelled copies of each of shapes (a) to (e), while (f) has only one.}
    \label{f:binary.spinal.2.2}
\end{figure}

Our intent is now to enumerate binary spinal fully tree-sibling networks. As the proof for this theorem is very similar to the previous proofs for the corresponding theorems in the previous sections, the proof is omitted.

\begin{Theorem}\label{t:spinalftsn.covers}
Let $N$ be a binary spinal fully tree-sibling network with expanding cover $\C$. Then a set $C_i\in\C$ falls into one of the following three mutually exclusive categories: 

\begin{enumerate}
    \item It is part of a consecutive sequence of sets $C_i,\dots,C_{h}$ such that $C_i,\dots,C_{h-1}$ correspond to $R$-vertices and $C_{h}$ corresponds to  an $L$-vertex; or
    \item $C_i$ corresponds to an $L$-vertex and $C_{i-1}$ is not an $R$-vertex; or
    \item $C_i$ corresponds to a $P$-vertex that has an edge to some $R$-vertex of a sequence in Category $1$.
\end{enumerate}

Indeed, for each sequence of sets of length $h-1$ in Category $1$, there are $h-1$ corresponding $P$-vertices in Category $3$.
\end{Theorem}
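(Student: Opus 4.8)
The plan is to mirror the proof of Theorem~\ref{t:spinaltcn.RLP} (equivalently Theorem~\ref{t:spinaltcn}), replacing the tree-child local rule \textbf{(R4a)} by the weaker rule appropriate to fully tree-sibling networks, which permits stacks. Throughout I read the spine from the initial leaf towards the root, so that the \emph{spine parent} of an internal vertex is its immediate successor along this path and the \emph{spine child} its immediate predecessor. Since every internal vertex has exactly one incident non-spinal edge, it is exactly one of an $L$-, $P$- or $R$-vertex, so the three categories will be mutually exclusive once established; moreover $R$-vertices are precisely the reticulations (out-degree $1$), corresponding to singleton sets, while $L$- and $P$-vertices are tree vertices of out-degree $2$, corresponding to sets of size two, so one may equally argue in cover language via Theorem~\ref{t:binarycover} and the spinal condition, exactly as in the proof of Theorem~\ref{t:spinaltcn}.

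The crux is to translate the fully tree-sibling condition into the single local rule: \textbf{no $R$-vertex is immediately followed (towards the root) by a $P$-vertex}. To see this, fix a reticulation, i.e. an $R$-vertex $r$; being a reticulation of a binary network it has exactly two parents, namely its spine parent $p$ and its unique non-spinal parent $q$, which is the $P$-vertex lying above $r$ on the spine. I examine the siblings of $r$ contributed by each parent. If $p$ is an $R$-vertex then $p$ is a reticulation of out-degree $1$ whose only child is $r$, so $r$ has no sibling through $p$ (the stack case, which fully tree-sibling permits); if $p$ is an $L$-vertex the sibling of $r$ through $p$ is a leaf, hence a tree vertex; and if $p$ is a $P$-vertex the sibling of $r$ through $p$ is the reticulation parented by $p$, violating the condition. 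Symmetrically, the sibling of $r$ through $q$ is the spine child of $q$, which is a reticulation exactly when some $R$-vertex is immediately followed by $q$. Thus each forbidden configuration is an instance of ``an $R$-vertex immediately followed by a $P$-vertex'', and conversely any such adjacency exhibits a $P$-vertex whose two children are both reticulations, hence a reticulation with a reticulation sibling; this gives the equivalence.

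Granting the local rule, the categorisation follows. Moving from any $R$-vertex towards the root, each successive vertex is either another $R$-vertex (continuing a stack) or, on leaving the run of $R$-vertices, a vertex that is neither $R$ (by maximality) nor $P$ (forbidden), hence an $L$-vertex; so every maximal run $C_i,\dots,C_{h-1}$ of $R$-vertices is terminated by an $L$-vertex $C_h$, giving Category~1. An $L$-vertex terminating such a run is this $C_h$, while an $L$-vertex not immediately preceded by an $R$-vertex is precisely Category~2 (with the initial-vertex case $i=1$ handled as in Theorem~\ref{t:spinaltcn}). Finally, every $P$-vertex is the non-spinal parent of some reticulation $r$, and $r$ lies in a maximal $R$-run, i.e. a Category~1 block, so the $P$-vertex is in Category~3. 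For the concluding count, each reticulation has a single non-spinal parent and each $P$-vertex a single non-spinal child, so reticulations and $P$-vertices are in bijection; restricting to one Category~1 block, its $R$-vertices correspond one-to-one to the Category~3 $P$-vertices that parent them, as asserted.

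The main obstacle is the sibling analysis of the second paragraph: one must track that a reticulation contributes \emph{two} potential siblings, one through each parent; that the stack case genuinely yields no sibling through the spine parent (rather than a reticulation sibling); and that the two distinct sources of a reticulation-sibling both collapse to the same adjacency ``$R$ followed by $P$''. The boundary behaviour at the initial leaf and at the root, where certain vertex types cannot occur, must also be checked to match the $i=1$ provisions of Theorem~\ref{t:spinaltcn}.
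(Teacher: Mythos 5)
Your proposal is correct and takes essentially the approach the paper intends: the paper omits this proof as ``very similar'' to that of Theorem~\ref{t:spinaltcn}, and your argument is exactly that adaptation, with the sibling analysis correctly reducing the fully tree-sibling condition to the local rule that no $R$-vertex is immediately followed by a $P$-vertex (the paper's rule \textbf{(R4c)}). The run-termination argument, the handling of the $i=1$ and root boundary cases, and the bijection between $R$-vertices and their $P$-vertex parents all match what the omitted proof would need.
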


\begin{Observation}
    Considering the rules of the finite sequences in the proof sketch in Section \ref{s:proofsketch}, this is equivalent to adding a new rule that is a relaxed form of the rule for tree-child networks: \textbf{(R4c)} $R_i$ may be followed immediately by $L$ or $R_j$ for $i < j$, but cannot be followed by any $P_j$. Our enumeration proof will therefore count how many ways to arrange subsequences of the forms $R_i\dots R_hL,L,P_i$ in which, for some $0< m \le k$ there are $m$ of the form $R_i\dots R_hL$, $n-1-m$ of the form $L$ and $k$ of the form $P_i$, to make our sequence, while still enforcing the rules mentioned in Section \ref{s:proofsketch}. This will be achieved by mapping to an appropriate set.
\end{Observation}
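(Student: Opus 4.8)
The plan is to follow the tree-child argument of Theorems~\ref{t:spinaltcn} and~\ref{t:spinaltcn.RLP} almost verbatim, relaxing each use of the tree-child hypothesis to the weaker fully tree-sibling one. First I would record the structural preliminaries shared with the earlier proofs: since $N$ is binary and spinal, every non-leaf vertex lies on the spine and is incident to exactly one non-spinal edge, so (by the definition preceding Theorem~\ref{t:spinaltcn.RLP}) each such vertex is exactly one of an $L$-, $P$- or $R$-vertex, according to whether its non-spinal adjacent vertex is a leaf, a prior spinal vertex, or a subsequent spinal vertex. I would also note that the non-spinal parent of any reticulation is itself a $P$-vertex, and that every $R$-vertex has a (non-spinal) parent lying strictly above it on the spine, so no $R$-vertex can be the topmost interior vertex.

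The crux is to translate the fully tree-sibling condition into the single adjacency rule (R4c). Reading the spine from the initial leaf toward the root, the fully tree-sibling hypothesis says precisely that no tree vertex has two reticulation children, and the tree vertices are the root together with the $L$- and $P$-vertices. An $L$-vertex always has a leaf child and so satisfies this automatically, whereas a $P$-vertex $t_i$ has the reticulation it parents as one child, forcing its spinal child $t_{i-1}$ to be a tree vertex, i.e.\ not an $R$-vertex. I would prove the converse identically, examining each reticulation through its two parents in turn. The small subtlety here---which I expect to be the main obstacle---is to verify that analysing \emph{both} parents of every reticulation (its spinal parent and its non-spinal $P$-parent) yields the \emph{same} forbidden adjacency, namely ``an $R$-vertex immediately followed by a $P$-vertex'', with no additional constraints; once this is checked the condition is exactly (R4c).

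Granting (R4c), the three categories are immediate. An $R$-vertex can be followed only by a further $R$-vertex (by rule~R1) or by an $L$-vertex, so the $R$-vertices occur in maximal runs; the successor of the top vertex of such a run exists by the preliminary remark, and by (R4c) together with maximality it is an $L$-vertex, so the run terminates in an $L$-vertex---this is Category~1. Any $L$-vertex that does not terminate such a run has a non-$R$ predecessor, giving Category~2, and every $P$-vertex is the non-spinal parent of some reticulation, which necessarily lies in a Category~1 run, giving Category~3; these three possibilities are mutually exclusive and exhaust all vertices. For the final counting claim I would exhibit the bijection sending each reticulation in a run to its unique non-spinal parent: distinct reticulations have distinct such parents, so a run containing $r$ reticulations corresponds to exactly $r$ $P$-vertices, as asserted.
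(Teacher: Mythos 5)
Your proposal is correct, but it follows a genuinely different route from the paper's. The paper gives no argument for this Observation at all: it is presented as an immediate reading-off of Theorem \ref{t:spinalftsn.covers}, whose own proof is in turn omitted as being ``very similar'' to the tree-child case, and that template (Theorem \ref{t:spinaltcn}) works at the level of expanding covers --- combining the spinal condition on covers with the binary characterisation of Theorem \ref{t:binarycover} and the cover characterisation of the network class --- with the sequence rule only extracted afterwards. You instead work entirely at the graph level and reverse the logical order: you first show that the fully tree-sibling condition is exactly the forbidden adjacency ``no $P_j$ immediately after an $R_i$'', correctly isolating and resolving the one real subtlety (the two parents of a reticulation --- its spinal parent and its non-spinal $P$-parent --- impose the \emph{same} forbidden pattern and nothing more), and only then derive the three categories of Theorem \ref{t:spinalftsn.covers} from (R4c), using the preliminary fact that no $R$-vertex can be the topmost interior vertex, so every maximal run of $R$-vertices has a spinal successor, which by (R4c) and maximality is an $L$-vertex; your bijection sending each reticulation in a run to its non-spinal parent then gives the final counting claim. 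Your approach buys a self-contained, elementary proof that establishes the Observation's asserted equivalence in both directions --- note that the converse direction (categories imply (R4c)) is immediate from your exhaustiveness claim, since every $R$-vertex sits inside a Category 1 run whose next element is an $R$- or $L$-vertex, but this deserves one explicit sentence if you want the literal ``equivalent to'' of the statement; what the paper's cover-based route buys is uniformity with the tree-child and stack-free sections and with its broader programme of encoding network classes by expanding covers.
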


Denote the set of permutations of a set of $n$ distinct elements with $k$ cycles by $S_1(n,k)$, noting that these are counted by the Stirling numbers of the first kind. We can now show the main result of this section.  Let $\mathcal{FTS}_{n,k}$ be the class of fully tree-sibling binary spinal networks on $n$ leaves with $k$ reticulations. Then we have the following.

\begin{Theorem}\label{t:spinalftsn}
\[\left\vert\mathcal{FTS}_{n,k}\right\vert=n!\left|S_1(n-1+k,n-1)\right|-\frac{n!}{2}|S_1(n-2+k,n-2)|.\]

\end{Theorem}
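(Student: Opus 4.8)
The proof will follow the by-now familiar template established for the tree-child and stack-free cases, so the plan is to adapt that machinery to the combinatorial structure dictated by rule \textbf{(R4c)}. Having characterised the sets of the cover in Theorem~\ref{t:spinalftsn.covers}, I would first fix a binary spinal fully tree-sibling network, take its spine, ignore the initial leaf, and label the internal vertices along the spinal path sequentially, this time assigning a single shared label to each maximal consecutive run $C_i,\dots,C_{h}$ forming a Category~$1$ sequence (a string of $R$-vertices terminated by an $L$-vertex). As in the earlier proofs, there are $n-1+2k$ internal vertices of which $k$ are reticulations, and since each Category~$1$ run of $h-1$ reticulations collapses those $h-1$ extra vertices into the run's single label, the total number of distinct labels is $n-1+k$.

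The heart of the argument is to build a surjection from $\mathcal{FTS}_{n,k}$ onto $S_1(n-1+k,n-1)$, the set of permutations of $n-1+k$ elements into $n-1$ cycles. The natural construction is to read each Category~$1$ sequence together with its associated Category~$3$ $P$-vertices as a cycle: by the final sentence of Theorem~\ref{t:spinalftsn.covers}, a Category~$1$ run containing $h-1$ reticulation ($R$) vertices has exactly $h-1$ corresponding $P$-vertices in Category~$3$, and the order in which these $P$-vertices attach along the spine records a linear ordering that can be encoded as the non-trivial part of a cycle on those $h-1+1 = h$ labels (the $R$-run's label plus its $h-1$ matched $P$-labels). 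Isolated Category~$2$ $L$-vertices, whose labels are equivalent only to themselves, then correspond to fixed points, i.e. $1$-cycles. One must check that the number of cycles produced is exactly $n-1$: the $L$-vertices (the starts of Category~$1$ runs, plus the standalone Category~$2$ vertices) are in bijection with the $n-1$ non-spinal leaves, and each such $L$-vertex is precisely the minimal generator of one cycle, giving $n-1$ cycles; this is the exact analogue of the counting $a+b=n-1$ performed in the stack-free proof, now with Stirling numbers of the first kind in place of the second. I would verify that this map is well-defined independently of leaf labels and that it is surjective, so that $|S_1(n-1+k,n-1)|$ counts the network shapes in $\mathcal{FTS}_{n,k}$.

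The main obstacle I anticipate is justifying that the cyclic (rather than merely set-partition) structure is the correct target, and in particular that the assignment of $P$-vertices to positions within a cycle is a genuine bijection rather than just a partition into blocks. In the stack-free case the $P$-vertices attached to an $R$-run carried no internal order that mattered, yielding a set partition; here, because rule \textbf{(R4c)} forces a run $R_i\dots R_h$ of distinct reticulations each needing its own later $P_j$, the relative positions of those $P$-vertices along the remainder of the spine encode additional ordering data, and it is precisely this extra ordering that upgrades the count from Stirling-second to Stirling-first. I would need to argue carefully that every cyclic arrangement on a block of size $h$ arises from exactly one admissible interleaving of the $P$-vertices satisfying rules \textbf{(R1)}--\textbf{(R3)} together with \textbf{(R4c)}, and conversely. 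Once the surjection onto $S_1(n-1+k,n-1)$ is established and shown to count shapes, the final formula follows immediately by invoking Lemma~\ref{l:shapetonet} with $|S(R_{n,k})| = |S_1(n-1+k,n-1)|$ and $|S(R_{n-1,k})| = |S_1(n-2+k,n-2)|$, exactly as in the tree-child and stack-free theorems, with no further computation required.
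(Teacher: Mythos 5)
Your proposal is correct and follows essentially the same route as the paper's proof: the same spine labelling with one shared label per Category~1 run (giving $n-1+k$ labels), the same map sending Category~2 vertices to fixed points and each run plus its matched $P$-vertices to an $h$-cycle (the paper orders the $P$-labels by the spine position of their matched $R$-vertex within the run, which is the same data your ``order in which the $P$-vertices attach'' describes), the same count of $n-1$ cycles via the non-spinal leaves, and the same final appeal to Lemma~\ref{l:shapetonet}. The ``obstacle'' you flag---that the matching of $P$-vertices to $R$-vertices within a run is exactly what upgrades set partitions ($S_2$) to permutations ($S_1$)---is precisely the observation the paper relies on, stated there only briefly.
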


\begin{proof}
     This proof will proceed similarly to the corresponding enumeration results from the previous two sections. We construct a surjection from the set $\mathcal{FTS}_{n,k}$ to $S_1(n-1+k,n-1)$ in the following way. Again we label along the path of the spine (ignoring the initial leaf vertex), but this time repeating a label for the consecutive sequences in Category $1$ (Theorem~\ref{t:spinalftsn.covers}). Now, Category $2$ $L$-vertices are mapped to singleton cycles, corresponding to their label. A Category $1$ sequence, say $S$, together with its corresponding $h-1$ $P$-vertices are mapped to $h$-cycles in which the first element is the label of the vertices corresponding to $S$, and the remaining elements are the labels of the corresponding $P$-vertices, ordered according to the position of their corresponding $R$-vertex in $S$. 

     We first note that there will be $n-1+k$ labels, following from the facts that there are $n-1$ total labels assigned to Categories $1$ and $2$ (as there is exactly one leaf for each label corresponding to a set of sets or set of Category $1$ and $2$ respectively), as well as exactly $k$ elements in Category $3$, due to the bijective relationship between $P$-vertices and $R$-vertices in a network in $\mathcal{FTS}_{n,k}$. We now claim there will be $n-1$ such cycles. This follows directly from the fact that we produce a cycle exactly once for each set of sets in Category $1$ and each set in Category $2$, of which there are precisely $n-1$ total.

It is again clear that the mapped permutation is unique for each network up to permutation of the leaves, and hence the number of network shapes in $\mathcal{FTS}_{n,k}$ is $|S_1(n-1+k,n-1)|$, which appears in the OEIS (off by 1) as A354979 \cite{oeis}. By Lemma \ref{l:shapetonet}, we obtain the statement of the theorem.
\end{proof}

This sequence does not appear to be in the OEIS, so we have included some small values in Table \ref{tab:ftsncount}.

\begin{table}[h]
    \centering
        \resizebox{\textwidth}{!}{
\begin{tabular}{|l|c|c|c|c|c|c|c|}
    \hline
    \textbf{$k\backslash n$} & \textbf{2} & \textbf{3} & \textbf{4} & \textbf{5} & \textbf{6} & \textbf{7} & \textbf{8} \\ \hline
    \textbf{0} & 1 & 3 & 12 & 60 & 360 & 2520 & 20160 \\ \hline
    \textbf{1} & 2 & 15 & 108 & 840 & 7200 & 68040 & 705600 \\ \hline
    \textbf{2} & 4 & 60 & 708 & 8100 & 95400 & 1181880 & 15523200 \\ \hline
    \textbf{3} & 12 & 282 & 4800 & 74700 & 1146600 & 17922240 & 289578240 \\ \hline
    \textbf{4} & 48 & 1572 & 35688 & 714840 & 13726440 & 262324440 & 5085823680 \\ \hline
    \textbf{5} & 240 & 10224 & 294000 & 7286160 & 169691760 & 3867658200 & 88160909760 \\ \hline
    \textbf{6} & 1440 & 76248 & 2678160 & 79754160 & 2199664800 & 58620592800 & 1545081088320 \\ \hline
    \textbf{7} & 10080 & 642384 & 26829792 & 938778000 & 30089505600 & 922684316400 & 27736453704960 \\ \hline
    \textbf{8} & 80640 & 6038496 & 293766912 & 11865754560 & 435296014560 & 15157877454720 & 513727745731200 \\ \hline
\end{tabular}
    }
    \caption{The number of binary spinal fully tree-sibling networks $|\mathcal{FTS}_{n,k}|$, for $2 \le n \le 8$ and $0 \le k \le 8$, using Theorem~\ref{t:spinalftsn}. }
    \label{tab:ftsncount}
\end{table}

\begin{Example}
For example, in Figure \ref{fig:ftsn}, the blue vertices labelled $1$ are Category $1$ vertices, and the $R$-vertices in these have an edge to the non-spinal parents $3$ and $4$. As the $R$-vertex has an adjacent non-spinal vertex labelled $4$, and the second has an edge to $3$, we obtain the cycle $(143)$. As the (only) Category $2$ vertex is labelled $2$, we obtain the cycle $(2)$. Putting this together, this network shape corresponds to the permutation $(143)(2)$.
\end{Example}

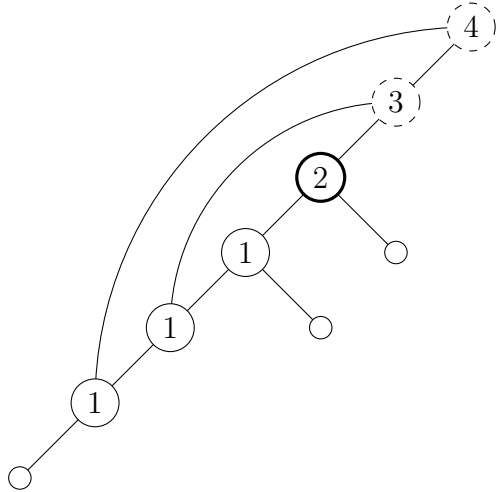
\begin{figure}[ht]
    \centering
    \begin{tikzpicture}[dot/.style={circle,fill=white,radius=3pt,inner sep = 3pt}] 
        \draw (0, 0) -- (6,6);
        \draw (3, 3) -- (4,2);
        \draw (4,4) -- (5,3);
        \draw [bend left=45] (1,1) to (6,6);
        \draw [bend left=45] (2,2) to (5,5);
        
        \node[draw,dot] at (0, 0)   (a) {};
        \node[draw,dot] at (1, 1)   {1};
        \node[draw,dot] at (2, 2)   (c) {1};
        \node[draw,dot] at (3, 3)   (d) {1};
        \node[draw,dot,very thick] at (4, 4)   (e) {2};
        \node[draw,dot,dashed] at (5,5)   (f) {3};
        \node[draw,dot,dashed] at (6,6)   (g) {4};
        \node[draw,dot] at (4,2)   (l1) {};
        \node[draw,dot] at (5,3)   (l2) {};
    \end{tikzpicture}
    \caption{An example of a binary spinal fully tree-sibling network with interior vertices labelled according to the tree-sibling labelling scheme. Each vertex is drawn according to which category they are in. Category $1$ are plain circles, Category $2$ are dashed, and Category $3$ are bold. This example is mapped to, in cycle notation, $(143)(2)$.}
    \label{fig:ftsn}
\end{figure}

As binary spinal fully tree-sibling networks do not have a limit on the number of reticulations that may appear, there are infinitely many for each $n$, and so we cannot achieve an analogous result to Section \ref{s:caterpillar} where we found an expression for $|\mathcal{STC}_n|$ summing over all $k$.

\subsection{Other classes}

\subsubsection{Parallel edges allowed}
\label{s:parallel}

Consider the class of binary spinal networks on $n$ leaves with $k$ reticulations with no further restrictions, and allow parallel edges. Then, up to rearrangement of the leaves, this will be equivalent to labelling the internal vertices from $1$ to $n-1+2k$, and then selecting $k$ pairs of internal vertices and for each pair, say $i$ and $j$ with $i<j$, set the vertex labelled $i$ to be the non-spinal child of the vertex labelled $j$. Enumerating the class is therefore simply the number of ways of selecting $k$ pairs of numbers from $1$ to $n-1+2k$, which is

\[\frac{\binom{n-1+2k}{2}\binom{n-3+2k}{2} \dots \binom{n+1}{2}}{k!} = \frac{(n-1+2k)!}{2^k(n-1)!k!}.\]

Then, by Lemma \ref{l:shapetonet}, we have the following result.

\begin{Theorem}
    Let $\mathcal{BSPN}_{n,k}$ be the class of binary spinal networks, in which parallel edges are permitted. Then
    
\begin{equation} 
\begin{split}
|\mathcal{BSPN}_{n,k}| & =n!\left(\frac{(n-1+2k)!}{2^k(n-1)!k!}-\frac{(n-2+2k)!}{2^{k+1}(n-2)!k!}\right)\\
 & = \frac{n(n-1+4k)(n-2+2k)!}{2^{k+1}k!}.
\end{split}
\end{equation}
\end{Theorem}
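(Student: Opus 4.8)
The plan is to first count binary spinal network shapes with parallel edges allowed, then invoke Lemma~\ref{l:shapetonet} to pass from shapes to labelled networks. The key observation is that once we fix the spinal path and ignore leaf labels, a binary spinal network shape is completely determined by specifying, for each of the $k$ reticulations, which two internal vertices along the spine are its parents. Since the network is binary with $n$ leaves and $k$ reticulations, there are exactly $n-1+2k$ internal vertices lying on the spinal path. Labelling these internal vertices $1,\dots,n-1+2k$ along the path, a reticulation is encoded by choosing an unordered pair of distinct vertices from this set (the smaller index being the reticulation's non-spinal child and the larger being one of its parents, with the spine supplying the other incoming edge). The shape is thus recovered by choosing $k$ disjoint pairs.

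Next I would carry out the count of such shapes. Choosing $k$ pairwise-disjoint unordered pairs from $n-1+2k$ elements is counted by the displayed product of binomial coefficients divided by $k!$ (to discount the order in which the $k$ pairs are selected), giving the closed form $\frac{(n-1+2k)!}{2^k(n-1)!k!}$, which counts $|S(\mathcal{BSPN}_{n,k})|$, the network shapes. The same reasoning applied with $n-1$ leaves yields $|S(\mathcal{BSPN}_{n-1,k})| = \frac{(n-2+2k)!}{2^k(n-2)!k!}$.

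Finally I would apply Lemma~\ref{l:shapetonet}, which requires that the class $\mathcal{BSPN}$ be closed under leaf permutations, cherry reductions, and re-cherrying; closure under permutations is immediate, and closure under the cherry operations is clear since these operations act only on the leaf furthest from the root and do not create or destroy parallel edges among reticulations. Substituting the two shape counts into the formula $|R_{n,k}| = n!|S(R_{n,k})| - \tfrac{n!}{2}|S(R_{n-1,k})|$ gives the first displayed expression; combining the two fractions over a common denominator and simplifying yields the compact second form $\frac{n(n-1+4k)(n-2+2k)!}{2^{k+1}k!}$.

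The main obstacle I anticipate is not the arithmetic but justifying the bijection between shapes and choices of $k$ disjoint pairs — specifically verifying that each choice of disjoint pairs yields a \emph{valid} binary spinal network shape (in particular a directed acyclic graph with the correct in- and out-degrees) and that distinct choices give distinct shapes. One must confirm that orienting each non-spinal edge from the higher-indexed vertex to the lower-indexed vertex, together with the spinal edges oriented toward the root, produces no directed cycle and gives every chosen lower-indexed vertex the required in-degree $2$; the acyclicity follows from the consistency of the index ordering along the spine, but this is the step that genuinely uses the structural constraints rather than pure counting.
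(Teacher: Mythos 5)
Your proposal is correct and follows essentially the same route as the paper: encoding each shape as a choice of $k$ disjoint unordered pairs from the $n-1+2k$ internal vertices along the spine (smaller index becoming the reticulation, larger its non-spinal parent), obtaining $\frac{(n-1+2k)!}{2^k(n-1)!k!}$ shapes, and then applying Lemma~\ref{l:shapetonet} to pass to labelled networks. Your closing remark about verifying acyclicity and degree conditions is a point the paper glosses over entirely, so including it would only strengthen the argument.
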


\subsubsection{Parallel edges not allowed}

Enumeration of binary spinal networks on $n$ leaves with $k$ reticulations and no parallel edges, denoted $\mathcal{BSN}_{n,k}$, is similar to the previous case, except that we may not select a consecutive pair of vertices. Krasko and Omelchenko provide a recurrence relation in \cite[Lemma 2.2]{loopless} for a related problem that obtains the same sequence, involving counting linear diagrams obtained from chordal diagrams in which the number of chords between consecutive pairs is equal to a given value. Rather than construct a bijection between their construction and ours, we provide a direct proof that is adapted from theirs. This will count the number of network shapes, and a full count of networks may then be obtained using Lemma \ref{l:shapetonet}. Let $\mathcal{BSNS}_{n,k}$ denote the set of shapes of binary spinal networks on $n$ leaves with $k$ reticulations and no parallel edges. Additionally, let $\mathcal{BSN}_{n,k}$ denote the set of binary spinal networks on $n$ leaves with $k$ reticulations and no parallel edges. 

\begin{Theorem}
\label{t:binaryspin}
    \[|\mathcal{BSNS}_{n,k}| = |\mathcal{BSNS}_{n-1,k}| + (n+2k-3)|\mathcal{BSNS}_{n,k-1}| + n|\mathcal{BSNS}_{n+1,k-2}|, \]
    where $|\mathcal{BSNS}_{n,k}|=0$ for $k<0$, $|\mathcal{BSNS}_{n,0}|=1$ for $n \ge 1$, and $|\mathcal{BSNS}_{0,k}|=0$.

    Additionally, 

    \[|\mathcal{BSN}_{n,k}|=n!|\mathcal{BSNS}_{n,k}|-\frac{n!}{2}|\mathcal{BSNS}_{n-1,k}|.\]
    
\end{Theorem}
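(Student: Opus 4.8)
The plan is to handle the two displayed identities separately, with the genuine content lying in the recurrence. The second identity is an immediate instance of Lemma~\ref{l:shapetonet}: the class $\mathcal{BSN}$ of binary spinal networks with no parallel edges is closed under permutations of leaf labels, cherry reductions and re-cherrying, since the only admissible cherry sits at the internal vertex furthest from the root and neither operation can introduce a parallel edge. Taking $R=\mathcal{BSN}$, so that $S(R_{n,k})=\mathcal{BSNS}_{n,k}$, Lemma~\ref{l:shapetonet} gives $|\mathcal{BSN}_{n,k}|=n!|\mathcal{BSNS}_{n,k}|-\tfrac{n!}{2}|\mathcal{BSNS}_{n-1,k}|$ directly.

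For the recurrence I would use the encoding of Section~\ref{s:proofsketch}: a shape in $\mathcal{BSNS}_{n,k}$ is recorded by the sequence of its $N:=n-1+2k$ internal vertices read along the spine from the initial leaf to the root, each vertex being an $L$-, $R$- or $P$-vertex, with every $R$-vertex matched to its non-spinal parent ($P$-vertex) occurring strictly later, and with (R3) forbidding a matched $P$ from immediately following its $R$. Equivalently this is a linear chord diagram on $N$ points carrying $n-1$ isolated points ($L$) and $k$ chords ($R$--$P$ pairs), no chord joining two adjacent points. I would then condition on the topmost (root) vertex, which can only be an $L$- or $P$-vertex. If it is an $L$-vertex it lies in no chord, and its deletion is a bijection onto $\mathcal{BSNS}_{n-1,k}$ (the inverse re-attaches an $L$-root); this yields the first term.

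If the root is a $P$-vertex it is the later endpoint of a chord whose $R$ sits at some position $r\le N-2$. I would split according to whether deleting this chord creates a parallel edge, which occurs precisely when the vertices immediately below and above the removed $R$ (positions $r-1$ and $r+1$) are themselves a matched $R$--$P$ pair. In the clean case the deletion lands in $\mathcal{BSNS}_{n,k-1}$, and conversely every clean shape is obtained by taking a shape on $N-2$ vertices and inserting the new $R$ into any of the $N-2=n+2k-3$ slots other than the one immediately below the new root (which (R3) forbids); because adjacent vertices of a parallel-edge-free shape are never matched, each such insertion is legal and invertible, giving the second term $(n+2k-3)|\mathcal{BSNS}_{n,k-1}|$.

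The obstructed case is the crux and the expected main obstacle. There the root $P$, its $R$ at $r$, and the matched pair at $r-1,r+1$ assemble into two crossing chords $(r-1,r+1)$ and $(r,N)$ occupying the three consecutive vertices $r-1,r,r+1$ together with the root. I would collapse this block to a single $L$-vertex and delete the root, contracting the spine; a short bookkeeping check ($k\mapsto k-2$ chords, one isolated point gained) shows the result lies in $\mathcal{BSNS}_{n+1,k-2}$. The inverse expands a chosen isolated vertex back into the crossing block with the closer of the long chord placed at a fresh root, and since a shape in $\mathcal{BSNS}_{n+1,k-2}$ has exactly $n$ isolated vertices, this produces the third term $n|\mathcal{BSNS}_{n+1,k-2}|$. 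The delicate points are confirming that this collapse/expansion is a genuine bijection onto the obstructed case --- that the expansion never shortens an existing chord (spans only grow) nor either of the two new chords, that every obstructed shape arises exactly once, and that the index shift $(n,k)\mapsto(n+1,k-2)$ is matched by the count of available isolated vertices. Finally I would read off the boundary values ($|\mathcal{BSNS}_{n,0}|=1$ for $n\ge1$, and the stated vanishing conventions) from the chord-diagram description, noting that the recurrence is invoked only away from these boundaries.
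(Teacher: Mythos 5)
Your proposal is correct and takes essentially the same approach as the paper: the paper's proof likewise conditions on the root's non-spinal neighbour (a leaf, giving $|\mathcal{BSNS}_{n-1,k}|$; a reticulation whose removal is clean, giving the $(n+2k-3)$ edge-subdivision term; or a reticulation whose removal would create a parallel edge, handled by exactly your crossing-block expansion of one of the $n$ non-spinal leaves), and then invokes Lemma~\ref{l:shapetonet} for the second identity. Your chord-diagram language is just a rephrasing of the paper's spinal-sequence encoding (and of the Krasko--Omelchenko argument the paper says it adapts); the decomposition, bijections, and counts are the same.
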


\begin{proof}
 We consider base cases first. There are no networks with zero leaves and no networks with a negative number of reticulations, so $|\mathcal{BSNS}_{0,k}|=0$ and $|\mathcal{BSNS}_{n,k}|=0$ for $k<0$. Clearly the only network shape in $\mathcal{BSNS}_{n,0}$ is the caterpillar tree, so $|\mathcal{BSNS}_{n,0}|=1$. 

    Consider a tree shape in $\mathcal{BSNS}_{n,k}$. There are three possibilities for the final vertex $v$ (the $n+2k$-th):
    \begin{enumerate}
        \item The non-spinal adjacent vertex of $v$ is a leaf; or
        \item The non-spinal adjacent vertex of $v$ is a reticulation vertex $w$, and deletion of $v$ and suppression of $w$ would not result in a parallel edge; or
        \item The non-spinal adjacent vertex of $v$ is a reticulation vertex $w$, and deletion of $v$ and suppression of $w$ would result in a parallel edge.
    \end{enumerate}
    We can thus construct the network shapes in $\mathcal{BSNS}_{n,k}$ in the following three mutually exclusive ways:
    \begin{enumerate}
        \item Take a network shape in $\mathcal{BSNS}_{n-1,k}$, add a new (root) vertex at the end of the spinal path, with a new leaf for its non-spinal adjacent vertex. For each given network shape, there is only one way to do this.
        \item Take a network shape in $\mathcal{BSNS}_{n,k-1}$, pick an edge of the path and subtend it to form a new vertex $w$. Then add a new (root) vertex $v$ at the end of the path, and an edge from $v$ to $w$. There are $n+2k-3$ edges on the path, so $n+2k-3$ options for this.
        \item Take a network shape in $\mathcal{BSNS}_{n+1,k-2}$, and choose a vertex $w$ for which the non-spinal adjacent vertex is one of the $n$ leaves that are not on the spinal path. Delete this leaf, add a new (root) vertex $v$ at the end of the path, and add an edge from $v$ to $w$. Finally, subtend the two spinal edges adjacent to $w$, and add an edge between the two new vertices created. There are $n$ options for this.
    \end{enumerate}

    From these observations, the first statement of the theorem follows. We obtain the final statement by application of Lemma \ref{l:shapetonet}.
\end{proof}

Neither of the corresponding sequences from Theorem \ref{t:binaryspin} appear to be in the OEIS, so we have included some small values in Tables \ref{tab:binarycount} and \ref{tab:binarycountfull}. Note in particular that the only zero value in either table outside of $n=0$ is the case $n=1,k=1$. In this case, one can see that no such network can exist, as a binary spinal network with one leaf and one reticulation has precisely $3$ vertices, one of which is a leaf, so the reticulation must be the parent of the leaf and the parent of the reticulation along the spine must also be the parent of the reticulation along the non-spinal edge, contradicting the fact that we are not allowing parallel edges. 

\begin{table}[h]
    \centering
        \resizebox{\textwidth}{!}{
\begin{tabular}{|c|c|c|c|c|c|c|c|c|}
    \hline
    \textbf{k \textbackslash n} & \textbf{0} & \textbf{1} & \textbf{2} & \textbf{3} & \textbf{4} & \textbf{5} & \textbf{6} & \textbf{7} \\ \hline
    \textbf{0} & 0  & 1  & 1   & 1    & 1      & 1       & 1         & 1  \\ \hline
    \textbf{1} & 0  & 0  & 1   & 3    & 6      & 10      & 15        & 21  \\ \hline
    \textbf{2} & 0  & 1  & 6   & 21   & 55     & 120     & 231       & 406  \\ \hline
    \textbf{3} & 0  & 5  & 41  & 185  & 610    & 1645    & 3850      & 8106  \\ \hline
    \textbf{4} & 0  & 36  & 365 & 2010 & 7980   & 25585   & 70371     & 172305  \\ \hline
    \textbf{5} & 0  & 329  & 3984 & 25914 & 120274  & 446544  & 1410003    & 3932313  \\ \hline
    \textbf{6} & 0  & 3655  & 51499 & 386407 & 2052309  & 8655780  & 30839655   & 96451278  \\ \hline
    \textbf{7} & 0  & 47844  & 769159 & 6539679 & 39110490  & 184652985  & 732520998   & 2538187938 \\ \hline
    \textbf{8} & 0  & 721315  & 13031514 & 123823305 & 823324755  & 4301276760  & 18797883390  & 71463760368   \\ \hline
\end{tabular}
    }
    \caption{The number of binary spinal network shapes with no parallel edges $|\mathcal{BSNS}_{n,k}|$, for $0 \le n \le 7$, $0 \le k \le 8$, using Theorem~\ref{t:binaryspin}.}
    \label{tab:binarycount}
\end{table}

\begin{table}[h]
    \centering
        \resizebox{\textwidth}{!}{
\begin{tabular}{|l|c|c|c|c|c|c|c|c|}
    \hline
    \textbf{k \textbackslash n} & \textbf{0} & \textbf{1} & \textbf{2} & \textbf{3} & \textbf{4} & \textbf{5} & \textbf{6} & \textbf{7} \\ \hline
    \textbf{0} & 0 & 1 & 1 & 3 & 12 & 60 & 360 & 2520 \\ \hline
    \textbf{1} & 0 & 0 & 2 & 15 & 108 & 840 & 7200 & 68040 \\ \hline
    \textbf{2} & 0 & 1 & 11 & 108 & 1068 & 11100 & 123120 & 1464120 \\ \hline
    \textbf{3} & 0 & 5 & 77 & 987 & 12420 & 160800 & 2179800 & 31152240 \\ \hline
    \textbf{4} & 0 & 36 & 694 & 10965 & 167400 & 2591400 & 41456520 & 691082280 \\ \hline
    \textbf{5} & 0 & 329 & 7639 & 143532 & 2575608 & 46368840 & 854446320 & 16265649960 \\ \hline
    \textbf{6} & 0 & 3655 & 99343 & 2163945 & 44618532 & 915555060 & 19088470800 & 408398510520 \\ \hline
    \textbf{7} & 0 & 47844 & 1490474 & 36930597 & 860175612 & 19811728800 & 460940043960 & 10946514292560 \\ \hline
    \textbf{8} & 0 & 721315 & 25341713 & 703845288 & 18273914460 & 466753725900 & 11986016407200 & 312806686111920 \\ \hline
\end{tabular}
    }
    \caption{The number of binary spinal networks with no parallel edges $|\mathcal{BSN}_{n,k}|$, for $0 \le n \le 7$, $0 \le k \le 8$ using Theorem~\ref{t:binaryspin}.}
    \label{tab:binarycountfull}
\end{table}

\section{Discussion}

In this paper we have enumerated several classes of phylogenetic network that generalise the class of phylogenetic trees known as ``caterpillar'' trees.  These trees are important mathematically, because they sit at one extreme of the space of rooted trees, as the least ``balanced'', and their combinatorial properties are easier to handle.  They are also of interest biologically, as in the presence of rapid evolution, such as with viral pathogens, the trees that arise are often close to caterpillar in shape~\cite{fitch1991positive,bush2000effects,grenfell2004unifying}.

Phylogenetic networks pose many challenges with enumeration, because of their complexity, and so structural restrictions that provide the possibility of concrete results can be important.   
The present paper takes the approach of applying heavy restrictions on network structure, while taking advantage of a new framework for set-theoretic bijections with networks that have been recently developed~\cite{francis2023phylogenetic,francis2023labellable}.

The use of covers for labellable networks may also help in finding recursive formulae for networks along the lines of the Pons-Batle conjecture in Eq.~\eqref{e:TCN.recursion}~\cite{pons2021combinatorial}, and this seems an interesting direction for future research.

\section*{Acknowledgements}
AF would like to thank the Institute for Computational and Experimental Research in Mathematics (ICERM) in Providence, Rhode Island, for its hospitality during some of the writing of this paper.


\bibliographystyle{plain}
\bibliography{sample}

\end{document}